\newcommand{\be}{\begin{equation}}
\newcommand{\ee}{\end{equation}}
\newcommand{\beq}{\begin{eqnarray}}
\newcommand{\eeq}{\end{eqnarray}}
\newcommand{\ba}{\begin{align}}
\newcommand{\ea}{\end{align}}
\newcommand{\red}[1]{{\leavevmode\color{red}{#1}}}
\newtheorem{theorem}{Theorem}[section]
\newtheorem{proposition}[theorem]{Proposition}
\newtheorem{corollary}[theorem]{Corollary}
\newtheorem{definition}[theorem]{Definition}
\newenvironment{proof}{\paragraph{Proof:}}{\hfill$\square$}
\begin{document}


\title{Generalized Hamilton spaces: new developments and applications}

\author{J.J. Relancio}
\affiliation{\small Departamento de Matemáticas y Computación, Universidad de Burgos, Plaza Misael Bañuelos, 09001, Burgos, Spain\\Centro de Astropartículas y F\'{\i}sica de Altas Energ\'{\i}as (CAPA),
Universidad de Zaragoza, C. de Pedro Cerbuna, 12, 50009 Zaragoza, Spain}
\email{jjrelancio@ubu.es}
\author{L. Santamaría-Sanz}
\affiliation{\small Departamento de Física, Universidad de Burgos, Plaza Misael Bañuelos, 09001 Burgos, Spain}
\email{lssanz@ubu.es}

\begin{abstract}
In this work, we make new developments in generic cotangent bundle geometries, depending on all phase-space variables. In particular, we will focus on the so-called generalized Hamilton spaces, discussing how the main ingredients of this geometrical framework, such as the Hamiltonian and the nonlinear and affine connections, can be derived from a given metric. Several properties of this kind of spaces are demonstrated for autoparallel Hamiltonians. Moreover, we study the spacetime and momentum isometries of the metric. Finally, we discuss the possible applications of cotangent bundle geometries in quantum gravity, such as the construction of deformed relativistic  kinematics and non-commutative spacetimes.
\end{abstract}

\maketitle


\section{Introduction}
(Co)tangent bundle geometries, which consider (momentum)velocity dependent metrics \cite{yano1973tangent}, have attracted renewed interest in the last years.  While the tangent bundle appears in Finsler \cite{Cartan1933} and Lagrange spaces~\cite{Kern1974, miron2001geometry}, the cotangent bundle structure can be examined within the so-called Hamilton and generalized Hamilton spaces~\cite{miron2001geometry}. The basic idea is to take a $n$-dimensional base manifold $M$ and construct the cotangent bundle manifold $T^*M$ formed by this base manifold (acting as the spacetime) and the fibers (the momentum space). Due to its multiple applications, these type of geometric structures are studied in several fields, such as Mathematics (see~\cite{miron2001geometry,Voicu:2021fim} and references therein), Medicine~\cite{WOS:000462628900019}, Biology~\cite{bookAntonelli}, 
and Physics~\cite{Pfeifer:2019wus}, to name a few. Some of their applications are the generalization of the study of Zermelo's navigation problem (to find the minimum time trajectory of a ship sailing on a sea with the presence of either constant~\cite{Zermelo1931berDN} and variable wind~\cite{aldea2017generalized}), the wildfire spreading~\cite{Markvorsen2016AFG}, fluid dynamics in inhomogeneous porous media~\cite{YAJIMA20151}, and seismic theory~\cite{ad8465ddcf864974a363481b760d60e1}, among many others. 

A more recent field of study, in which this kind of geometries has been recently considered, is quantum gravity (QG), a theory combining general relativity (GR) and quantum field theory. In particular, a possible way to effectively describe possible QG effects is by  modifying the kinematics of special relativity (SR). This can be done by preserving the relativity principle present in SR and GR, or by breaking it. The former case corresponds to doubly/deformed special relativity (DSR) theories~\cite{Amelino-Camelia2002b,AmelinoCamelia:2001vy,AmelinoCamelia:2008qg}, in which the deformation of the kinematics is present in the dispersion relation (Hamilton function) and the energy-momentum conservation law, while the latter approach is accomplished in Lorentz invariance violating (LIV) models~\cite{Colladay:1998fq,Kostelecky:2008ts}. Since the deformation of the kinematics is completely different in both approaches, they present very different phenomenologies~\cite{Addazi:2021xuf,AmelinoCamelia:2008qg,Amelino-Camelia:2011uwb,Carmona:2012un,AlvesBatista:2023wqm}, which suggest different ways of testing their physical predictions. 

Velocity and momentum dependent geometries have been considered as a way to describe modified Lagrange and Hamilton functions, leading in general to deformations of particle trajectories, in both LIV~\cite{Kostelecky:2011qz,Barcelo:2001cp,Weinfurtner:2006wt,Torri:2021hpj} and  DSR~\cite{Girelli:2006fw,AmelinoCamelia:2011bm,Amelino-Camelia:2014rga,Letizia:2016lew,Barcaroli:2015xda,Barcaroli:2016yrl,Lobo:2016blj,Barcaroli:2017gvg} scenarios. Regarding the latter, there is a clear way to connect the deformed kinematics of DSR and a curved momentum space~\cite{Carmona:2019fwf}:  both the Lorentz transformations and the composition law are isometries of a maximally symmetric momentum space, and the squared geodesic distance of that  metric can be identified with the Hamiltonian. Indeed, following the mathematical construction developed in~\cite{miron2001geometry} for Hamilton geometries, a series of papers~\cite{Relancio:2020zok,Relancio:2020rys,Relancio:2020mpa,Relancio:2021ahm,Relancio:2021asx,Pfeifer:2021tas,Relancio:2022mia,Relancio:2022kpf,Chirco:2022jvx,Mercati:2023pal} were devoted to the study of the cotangent bundle structure of a deformation of GR compatible with a deformed relativistic kinematics.  For that aim, metrics depending on both spacetime and momentum coordinates must be considered and, in particular, the ones corresponding to a generalized Hamilton space, because it gives the possible geometrical structure compatible with a relativistic deformation of the kinematics. 

In the literature, not so much attention has been put into generalized Hamilton spaces~\cite{miron2001geometry,WOS:000204390000017}. This could be in part due to the more complexity of these spaces (compared to Hamilton ones) and to the fact that not every ingredient needed for constructing its geometrical structure can be easily obtained once a metric is given. Indeed, as discussed in~\cite{miron2001geometry}, it is not yet obvious how to relate the nonlinear connection and the Hamiltonian to the metric. While the last point was clarified in~\cite{Relancio:2020rys}, showing that the squared distance in momentum space is a conserved quantity along horizontal curves (spacetime geodesics), the relation between the nonlinear connection and the metric is still missing.

In this paper we describe how, from a given metric, one is able to obtain the Hamiltonian and the affine connections. This differs from the usual construction of Hamilton spaces where, given a Hamiltonian, one is able to define the geometry~\cite{miron2001geometry}. Moreover, we demonstrate several properties of the affine connections and curvature tensors for autoparallel (conserved along horizontal paths) Hamiltonians, and in particular when they are either a homogeneous function on the momenta or an arbitrary function
of a homogeneous function on momenta. This property simplifies the construction of affine connections and curvature tensors.

It is also important to study the isometries of these generalized Hamilton spaces when considering a cotangent bundle metric depending on both spacetime and momentum coordinates.  These isometries can be studied in spacetime or in momentum space. The former case can be used to describe conserved quantities along horizontal paths (through Killing vector fields~\cite{Weinberg:1972kfs}). When requiring that the same number of spacetime isometries of the momentum independent metric is also present in the momentum dependent one, one is forced to restrict the possible momentum dependency of the metric, obtaining the same results of~\cite{Pfeifer:2021tas}. Moreover, we show that the squared momentum distance is indeed a Casimir of the algebra of isometries, explicitly showing examples in GR.  On the other hand, momentum isometries have been recently used for describing relativistic deformed kinematics~\cite{Carmona:2019fwf}. There are two main deformed kinematics studied in the literature, Snyder~\cite{Battisti:2010sr} and $\kappa$-Poincaré~\cite{Majid1994} kinematics.  Here, we show that the extension to curved spacetimes of both deformed kinematics can be obtained from a maximally symmetric momentum space when the metric depends on all the coordinates in phase space. We find that Snyder kinematics are privileged, since it is invariant under spacetime choices of observers (tetrads). Moreover, a possible connection between noncommutative spacetimes and the cotangent bundle geometry structure is proposed. Two possible definitions of noncommutativity of spacetime, from the momentum isometries generators and from the tetrad of the metrics, are considered. They also lead to the conclusion that Snyder noncommutativity seems to be privileged.

The structure of the paper is as follows. In Sec.~\ref{sec2} we give an overall view of  the necessary ingredients of the geometry of cotangent bundle geometry, and the basis of relativistic deformed kinematics in curved spacetime. In Sec.~\ref{sec3} we show a way of obtaining all the geometrical ingredients of a generalized Hamiltonian space once a generic metric is known, and demonstrate several properties of generalized Hamilton spaces. The discussion about isometries in spacetime and momentum space are described in Sec.~\ref{sec4} and ~\ref{sec5}, respectively. Finally, we collect the conclusions of this work in Sec.~\ref{conclusions}.

\section{Review of previous works}

\label{sec2}

\subsection{General notions of cotangent bundle geometries}

\begin{definition}[Cotangent bundle]
    The cotangent bundle $T^*M$ is a structure ($M, \pi, T^*_pM$) formed by the union of all the cotangent spaces $T^*_pM$ placed at each point $p$  of a smooth 4-dimensional manifold $M$. The bundle projection is given by $\pi:T^*M \to M;(x,k)\mapsto x$. Similarly, one can define the tangent bundle $T_p M$.
\end{definition}
Tangent and cotangent spaces of $T^*M$ split into horizontal and vertical subspaces, where the movements along the base and through the fiber of the bundle geometry take place, respectively.

\begin{definition}[Vertical and horizontal distributions]
   The tangent spaces $T_{(x,k)}$ of $T^*M$ split into vertical and horizontal subspaces as
\begin{align}
	T_{(x,k)}T^*M = \mathrm{V}_{(x,k)} \oplus \mathrm{H}_{(x,k)} = \mathrm{span}\left\{\bar\partial^\mu\right\} \oplus \mathrm{span}\left\{\delta_\mu\right\}\,, \qquad \textrm{with} \qquad \delta_\mu\, = \,\partial_\mu + N_{\nu\mu}(x,k) \bar{\partial}^\nu\,,
\end{align}
where $\partial_\mu= \partial/ \partial x^\mu$ and $\bar\partial^\mu=\partial/\partial k_\mu$. 
The vertical space can be identified physically with the momentum space and the horizontal space with the position space. The nonlinear connection (also called horizontal distribution), with coefficients $N_{\mu\nu}$, is supplementary to the vertical distribution. Analogously,  the cotangent spaces $T^*_{(x,k)}$ of $T^*M$ split into
\begin{align}
	T^*_{(x,k)}T^*M = \mathrm{V}^*_{(x,k)} \oplus \mathrm{H}^*_{(x,k)}\, =\, \mathrm{span}\left\{\delta k_\mu\right\} \oplus \mathrm{span}\left\{dx^\mu\right\}\,, \qquad \textrm{with}\qquad \delta k_\mu = d k_\mu - N_{\nu\mu}(x,k)\,dx^\nu\,.
\end{align}
\end{definition}

$T^*M$ is a metric manifold since we can define a metric on the cotangent bundle as
\begin{equation}
	\mathcal{G}= g_{\mu\nu}(x,k) dx^\mu dx^\nu+g^{\mu\nu}(x,k) \delta k_\mu \delta k_\nu\,.
\label{eq:line_element_ps} \end{equation}
Furthermore, a connection allows us to study how vector field changes from point to point in the manifold. We are interested in a connection whose covariant derivative maps vertical vectors to vertical ones (and analogously for the horizontal vectors). This is called affine connection. From it, one defines the covariant derivatives.

\begin{theorem}[Coefficients of affine connections] The affine connections on $T^*M$ can be uniquely represented in the adapted basis $(\delta_\mu, \bar\partial^\mu)$ in the following form
\begin{eqnarray}
    \begin{array}{ll}
       \hspace{4pt}\nabla_{\delta_\mu} \delta_\nu = H^\sigma{}_{\mu\nu}(x,k) \delta_\sigma\,,  &  \qquad \nabla_{\delta_\mu} \bar\partial^\nu\, =\, -H^\nu{}_{\mu\sigma}(x,k) \bar\partial^\sigma\,,\\
        \nabla_{\bar\partial^\mu} \bar\partial^\nu\, =\,- C_\sigma{}^{\mu\nu}(x,k)\bar\partial^\sigma\,, & \hspace{2pt} \qquad \nabla_{\bar\partial^\mu} \delta_\mu = C_\mu{}^{\sigma\nu}(x,k)\delta_\sigma\,,
    \end{array}
\end{eqnarray}
being $C_\mu{}^{\sigma\nu}(x,k)$ the affine connection coefficients in momentum space and $H^\sigma{}_{\mu\nu}(x,k)$ the ones of spacetime. 
\end{theorem}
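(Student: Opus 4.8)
The plan is to use that the adapted frame $(\delta_\mu,\bar\partial^\mu)$ is, over any coordinate chart of $T^*M$, a pointwise basis of the vector fields on $T^*M$, with dual coframe $(dx^\mu,\delta k_\mu)$ satisfying $\langle dx^\mu,\delta_\nu\rangle=\delta^\mu_\nu$, $\langle\delta k_\mu,\bar\partial^\nu\rangle=\delta_\mu^\nu$ and the mixed pairings vanishing. Since a linear connection is $C^\infty(T^*M)$-linear in its first argument and obeys the Leibniz rule in its second, $\nabla$ is completely determined by the four vector fields $\nabla_{\delta_\mu}\delta_\nu$, $\nabla_{\delta_\mu}\bar\partial^\nu$, $\nabla_{\bar\partial^\mu}\delta_\nu$, $\nabla_{\bar\partial^\mu}\bar\partial^\nu$, each of which I would then expand back in the frame.

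First I would impose the defining property of an affine connection, namely that $\nabla_X$ carries horizontal vector fields to horizontal ones and vertical to vertical. Applied to the four objects above, this kills in each case one of the two blocks: $\nabla_{\delta_\mu}\delta_\nu$ and $\nabla_{\bar\partial^\mu}\delta_\nu$ have only $\delta_\sigma$-components, whereas $\nabla_{\delta_\mu}\bar\partial^\nu$ and $\nabla_{\bar\partial^\mu}\bar\partial^\nu$ have only $\bar\partial^\sigma$-components. This already yields a block form in terms of, a priori, four independent families of functions on $T^*M$; I fix notation by \emph{defining} $H^\sigma{}_{\mu\nu}$ through $\nabla_{\delta_\mu}\delta_\nu=H^\sigma{}_{\mu\nu}\delta_\sigma$ and $C_\sigma{}^{\mu\nu}$ through $\nabla_{\bar\partial^\mu}\bar\partial^\nu=-C_\sigma{}^{\mu\nu}\bar\partial^\sigma$.

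The crux is to show that the two remaining families coincide, up to sign and index relabelling, with $H$ and $C$, i.e.\ to obtain $\nabla_{\delta_\mu}\bar\partial^\nu=-H^\nu{}_{\mu\sigma}\bar\partial^\sigma$ and $\nabla_{\bar\partial^\mu}\delta_\nu=C_\nu{}^{\sigma\mu}\delta_\sigma$. For this I would use that an affine connection must in addition be compatible with the canonical duality between the horizontal and vertical distributions — equivalently, that it respects the dual splitting $T^*_{(x,k)}T^*M=\mathrm V^*\oplus\mathrm H^*$ together with the natural pairings $\langle dx^\mu,\delta_\nu\rangle$ and $\langle\delta k_\mu,\bar\partial^\nu\rangle$. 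Concretely, compatibility with the decomposition forces $\nabla_{\delta_\rho}dx^\mu=-H^\mu{}_{\rho\nu}dx^\nu$ and $\nabla_{\delta_\rho}\delta k_\nu=H^\sigma{}_{\rho\nu}\delta k_\sigma$ on the coframe; then differentiating the constant functions $\langle\delta k_\mu,\bar\partial^\nu\rangle$ and $\langle dx^\mu,\delta_\nu\rangle$ along $\delta_\rho$ and along $\bar\partial^\rho$ and applying the Leibniz rule produces exactly the two missing formulas, with the signs and index placements as stated. Uniqueness of the representation is then immediate: given $\nabla$, the coefficients are read off as the frame components of $\nabla_{\delta_\mu}\delta_\nu$ and $\nabla_{\bar\partial^\mu}\bar\partial^\nu$ and are therefore determined; and since the frame is a pointwise basis, two affine connections sharing the same $(H,C)$ agree on all frame fields, hence on all vector fields. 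Conversely, any smooth choice of $H^\sigma{}_{\mu\nu}$ and $C_\sigma{}^{\mu\nu}$ defines through these four formulas a well-defined affine connection.

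The step I expect to be the main obstacle is making the structural input of the previous paragraph honest: one has to be explicit that an affine connection here means more than merely preserving the horizontal and vertical distributions separately — it must also be compatible with the canonical identification pairing them (the cotangent-bundle analogue of requiring compatibility with the natural structure $J$ relating the two distributions). Once that is isolated, the remainder is a short Leibniz-rule computation on the constancy of the frame–coframe pairings, plus bookkeeping to match the index conventions of the theorem.
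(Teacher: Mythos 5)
The paper does not actually prove this theorem; it only cites Miron et al., so there is nothing to compare your argument against except that reference — and your proof is, in structure, exactly the standard argument given there. The one genuinely important point, which you correctly isolate as the crux, is that the statement does \emph{not} follow from distribution-preservation alone: a linear connection that merely maps $\mathrm H$ to $\mathrm H$ and $\mathrm V$ to $\mathrm V$ has \emph{four} independent families of coefficients, and the collapse to the two families $(H,C)$ — the appearance of $-H^\nu{}_{\mu\sigma}$ in $\nabla_{\delta_\mu}\bar\partial^\nu$ and of $C$ in $\nabla_{\bar\partial^\mu}\delta_\nu$ — needs the further axiom that $\nabla$ preserve the canonical pairing between $\mathrm H$ and $\mathrm V$ (equivalently, that the canonical identifications $\delta_\mu\leftrightarrow\delta k_\mu$ and $\bar\partial^\mu\leftrightarrow dx^\mu$, or the two-form $\theta=\delta k_\mu\wedge dx^\mu$, be $\nabla$-parallel). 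This is part of Miron's definition of an $N$-linear connection and is silently omitted from the paper's definition of ``affine connection''; your proof supplies it, which is the right thing to do. One small repair in the execution: differentiating the frame--coframe pairings $\langle dx^\mu,\delta_\nu\rangle$ and $\langle\delta k_\mu,\bar\partial^\nu\rangle$ is by itself vacuous, since the induced dual connection preserves these pairings for \emph{any} linear connection; it only yields $\nabla_{\delta_\rho}dx^\mu=-H^\mu{}_{\rho\nu}dx^\nu$ and its analogues with the a priori independent coefficients. The step that actually forces the identification is differentiating the \emph{cross} pairing $\theta(\bar\partial^\nu,\delta_\sigma)=\delta^\nu_\sigma$ under the hypothesis $\nabla\theta=0$, or equivalently transporting $\nabla_{\delta_\rho}dx^\mu=-H^\mu{}_{\rho\nu}dx^\nu$ through the assumed $\nabla$-parallel identification $dx^\mu\leftrightarrow\bar\partial^\mu$. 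With that clarified, your existence and uniqueness bookkeeping is fine (note also the typo in the statement itself, where $\nabla_{\bar\partial^\mu}\delta_\mu$ should read $\nabla_{\bar\partial^\nu}\delta_\mu$).
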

\begin{proof}
    Proof can be found in~\cite{miron2001geometry}. 
\end{proof}
\begin{definition}[Covariant derivatives]
The horizontal and vertical covariant derivatives are given by
\begin{equation}
\begin{split}
T^{\mu_1\dots\mu_n}_{\nu_1\dots\nu_n;\rho}&= \delta_\rho T^{\mu_1\dots\mu_n}_{\nu_1\dots\nu_n}+T^{\lambda\mu_2\dots\mu_n}_{\nu_1\dots\nu_n}{H^{\mu_1}}_{\lambda \rho}+\dots +T^{\mu_1\dots\lambda}_{\nu_1\dots\nu_n}{H^{\mu_n}}_{\lambda \rho}-T^{\mu_1\dots\mu_n}_{\lambda\nu_2\dots\nu_n}{H^{\lambda }}_{\nu_1 \rho}-\dots T^{\mu_1\dots\mu_n}_{\nu_1\dots\lambda}{H^{\lambda }}_{\nu_n \rho} \,,\\
T^{\mu_1\dots\mu_n;\rho}_{\nu_1\dots\nu_n}&= \bar\partial^\rho T^{\mu_1\dots\mu_n}_{\nu_1\dots\nu_n}+T^{\lambda\mu_2\dots\mu_n}_{\nu_1\dots\nu_n}{C_{\lambda}}^{\mu_1 \rho}+\dots +T^{\mu_1\dots\lambda}_{\nu_1\dots\nu_n}{C_{\lambda}}^{\mu_n \rho}-T^{\mu_1\dots\mu_n}_{\lambda\nu_2\dots\nu_n}{C_{\nu_1}}^{\lambda \rho }-\dots T^{\mu_1\dots\mu_n}_{\nu_1\dots\lambda}{C_{\nu_n}}^{\lambda \rho }\,,
\end{split}
\end{equation}
respectively.
\end{definition}
In principle and up to this point, one could define infinite  connections having in general a non zero covariant derivative of the metric and being not symmetric (under the interchange of the two lower or the two upper indices, depending on whether we consider the horizontal or vertical connections, respectively). But we are interested in symmetric connections with zero covariant derivative of the metric, for (physical) simplicity. For these ones, an important result is worth highlighting. 
\begin{theorem}[Uniqueness of the affine connections]
There exists unique affine connections which are metrical, i.e.
\begin{equation}
    g_{\mu\nu;\rho}= {g_{\mu\nu}}^{;\rho}=0\,,
\end{equation}
with vanishing torsions, and with affine connection coefficients given by
\begin{align}
	{C_\rho}^{\mu\nu}(x,k)=-\frac{1}{2}g_{\rho\sigma}\left(\bar{\partial }^\mu g^{\sigma\nu}(x,k)+ \bar{\partial }^\nu g^{\sigma\mu}(x,k)-\bar{\partial }^\sigma g^{\mu \nu}(x,k)\right)\,,
	\label{eq:affine_connection_p}\\
	{H^\rho}_{\mu\nu}(x,k)=\frac{1}{2}g^{\rho\sigma}(x,k)\left(\delta_\mu g_{\sigma\nu}(x,k) +\delta_\mu g_{\sigma\mu}(x,k) -\delta_\sigma g_{\mu\nu}(x,k) \right)
	\label{eq:affine_connection_st}\,.
\end{align}
Then, the following symmetries holds
\begin{equation}
N_{\mu\nu}=N_{\nu\mu}  \,,\qquad {H^\rho}_{\mu\nu}={H^\rho}_{\nu\mu}\,,\qquad  {C_\rho}^{\mu\nu}={C_\rho}^{\nu\mu}\,.
\end{equation}
\end{theorem}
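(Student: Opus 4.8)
The plan is to prove this as the cotangent--bundle analogue of the fundamental theorem of (pseudo)Riemannian geometry, treating the vertical block (the $C$ coefficients), the horizontal block (the $H$ coefficients) and the nonlinear connection $N$ in turn. Metricity together with the vanishing of the relevant torsion component will fix $C$ and $H$ by a Koszul--type identity, while the symmetry of $N$ is forced by compatibility of the adapted frame with the canonical symplectic structure of $T^*M$.

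\emph{Vertical block.} Expanding the vertical metricity condition ${g^{\mu\nu}}^{;\rho}=0$ with the vertical covariant derivative of the preceding Definition gives $\bar\partial^\rho g^{\mu\nu}+g^{\lambda\nu}{C_\lambda}^{\mu\rho}+g^{\mu\lambda}{C_\lambda}^{\nu\rho}=0$. Writing $C^{\alpha\mu\rho}:=g^{\alpha\lambda}{C_\lambda}^{\mu\rho}$, I would cyclically permute the free indices $\mu,\nu,\rho$, add two of the resulting identities and subtract the third, and then use the vanishing of the vertical torsion --- i.e. ${C_\rho}^{\mu\nu}={C_\rho}^{\nu\mu}$, equivalently $C^{\alpha\mu\rho}=C^{\alpha\rho\mu}$ --- which collapses four of the six terms on the right-hand side, leaving
\be
\bar\partial^\rho g^{\mu\nu}+\bar\partial^\mu g^{\nu\rho}-\bar\partial^\nu g^{\rho\mu}=-2\,g^{\nu\lambda}{C_\lambda}^{\mu\rho}\,.
\ee
Contracting with $g_{\rho\sigma}$ (with appropriate relabelling) and using $g^{\mu\nu}=g^{\nu\mu}$ reproduces exactly~\eqref{eq:affine_connection_p}. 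No ordering issue arises here since the $\bar\partial^\mu$ commute among themselves.

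\emph{Horizontal block.} The computation is formally identical under $\bar\partial^\mu\to\delta_\mu$ and $g^{\mu\nu}\to g_{\mu\nu}$: $g_{\mu\nu;\rho}=0$ reads $\delta_\rho g_{\mu\nu}=g_{\lambda\nu}{H^\lambda}_{\mu\rho}+g_{\mu\lambda}{H^\lambda}_{\nu\rho}$, and the same cyclic add--add--subtract combination together with the vanishing of the horizontal torsion ${H^\rho}_{\mu\nu}={H^\rho}_{\nu\mu}$ yields~\eqref{eq:affine_connection_st}, now as an explicit expression in $g_{\mu\nu}$ and in $N_{\mu\nu}$ (through $\delta_\mu$). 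The one thing to check is that the Koszul cancellation does not secretly require $[\delta_\mu,\delta_\nu]=0$; it does not, because the cancellation uses only the lower-index symmetry of $H$, whereas $[\delta_\mu,\delta_\nu]=(\delta_\mu N_{\sigma\nu}-\delta_\nu N_{\sigma\mu})\bar\partial^\sigma$ is a purely vertical object that feeds the curvature, not the connection coefficients.

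\emph{Symmetries and uniqueness.} The relations ${H^\rho}_{\mu\nu}={H^\rho}_{\nu\mu}$ and ${C_\rho}^{\mu\nu}={C_\rho}^{\nu\mu}$ are then manifest from~\eqref{eq:affine_connection_p}--\eqref{eq:affine_connection_st} and $g_{\mu\nu}=g_{\nu\mu}$ --- they are, tautologically, the vanishing-torsion conditions already used. The remaining statement $N_{\mu\nu}=N_{\nu\mu}$ is the genuinely extra input: it follows from requiring the adapted coframe to respect the canonical symplectic form $\omega=dk_\mu\wedge dx^\mu$ of $T^*M$ (equivalently, the Poisson bracket $\{x^\mu,k_\nu\}=\delta^\mu_\nu$), since $\delta k_\mu\wedge dx^\mu-dk_\mu\wedge dx^\mu=-N_{\nu\mu}\,dx^\nu\wedge dx^\mu=-\tfrac12(N_{\nu\mu}-N_{\mu\nu})\,dx^\nu\wedge dx^\mu$, which vanishes iff $N$ is symmetric. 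Together with the fact that $N$ is itself determined by the metric (cf.\ Sec.~\ref{sec3} and~\cite{miron2001geometry}), every coefficient $N$, $H$, $C$ has thereby been written as an explicit functional of $g$ and its derivatives, which gives both existence (the formulas do define a metrical, torsion-free affine connection, as one verifies by running the Koszul steps backwards) and uniqueness (any such connection is forced to have these coefficients). I expect the delicate point to be precisely the treatment of $N$: one must confirm that fixing it through symplectic compatibility is consistent with, and not overdetermined by, the horizontal metricity equation, which is where the mutual dependence of $N$, $\delta_\mu$ and $H$ in~\eqref{eq:affine_connection_st} has to be used self-consistently.
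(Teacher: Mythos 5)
Your proof is correct and is essentially the argument the paper delegates to Ref.~\cite{miron2001geometry}: the paper gives no proof of its own beyond the citation, and the standard route there is exactly your blockwise Koszul derivation, with metricity plus the vanishing of the h- and v-torsions forcing \eqref{eq:affine_connection_st} and \eqref{eq:affine_connection_p} respectively. Your index bookkeeping in the vertical block checks out and reproduces \eqref{eq:affine_connection_p} exactly; in passing, your horizontal computation also reveals that the paper's Eq.~\eqref{eq:affine_connection_st} contains a typo (the second term should read $\delta_\nu g_{\sigma\mu}$, not $\delta_\mu g_{\sigma\mu}$), and your remark that the cancellation needs only the lower-index symmetry of $H$ and not $[\delta_\mu,\delta_\nu]=0$ is a point worth keeping. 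The one place where you go beyond the theorem's literal hypotheses is $N_{\mu\nu}=N_{\nu\mu}$: you are right that this does not follow from metricity and the symmetry of $H$ and $C$ alone, and your symplectic-compatibility argument (the horizontal distribution being Lagrangian for $dk_\mu\wedge dx^\mu$ iff $N$ is symmetric) is the standard way to obtain it --- in \cite{miron2001geometry} the antisymmetric part of $N$ is itself counted among the torsions of the structure, and for the canonical $N$ of Eq.~\eqref{eq:N_connection} the symmetry can alternatively be read off directly from that formula, so flagging it as ``extra input'' is an accurate reading of what the theorem statement leaves implicit.
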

\begin{proof}
    Proof can be found in~\cite{miron2001geometry}. 
\end{proof}

In  what follows, we will use metrical affine connections. We can now focus on two set of curves: 
\begin{itemize}
    \item Vertical autoparallels: the ones with parameter $\tau$ given by $\gamma(\tau) = (x_0,k(\tau))$ which are obtained when moving along the fiber while keeping the position coordinates fixed. They will define the distance in momentum space from which the Hamiltonian is obtained.  They satisfy $\nabla_{\dot\gamma}\dot \gamma = 0$, and are solutions of the equations
    \begin{equation}
	\ddot k_\mu  - C_{\mu}{}^{\nu\sigma}(x,k) \dot{k}_\nu \dot{k}_\sigma=0\,.
	\label{eq:vertical}
    \end{equation} 
    The dot in the above expression represents the derivative with respect to the parameter $\tau$. 
    \item Horizontal autoparallels: the ones parametrized as $\gamma(\tau) = (x(\tau),k(\tau))$.  They define force-free particle motion along spacetime and will be fulfilled by solutions of the Hamilton equations of motion. They satisfy  $
	\delta \dot{k}_\lambda=\dot{k}_\lambda-N_{\sigma\lambda} (x,k)\dot{x}^\sigma=0\,,
 $ and are solutions of
    \begin{equation}
	    \ddot{x}^\mu+{H^\mu}_{\nu\sigma}(x,k) \dot{x}^\nu\dot{x}^\sigma=0\,.
	    \label{eq:horizontal_geodesics_curve_definition}
    \end{equation} 
\end{itemize}

Notice that a Hamiltonian $\mathcal{H}(x,k)$ describes the evolution of horizontal curves, changing the phase-space coordinates as
\begin{equation}
\frac{dx^\mu}{d\tau}\,=\,\mathcal{N}\lbrace{\mathcal{H} (x,k),x^\mu\rbrace}\,=\,\mathcal{N} \bar \partial^\mu \mathcal{H} (x,k)\,,\qquad\frac{dk_\mu}{d\tau}\,=\,\mathcal{N}\lbrace{\mathcal{H} (x,k),k_\mu\rbrace}\,=\,-\mathcal{N} \partial_\mu \mathcal{H} (x,k)\,,
\label{eq:H-eqs}
\end{equation}
where $\mathcal{N}$ is a Lagrange multiplier (it will take the value $1/(2m)$ or $1/2$ for massive and massless particles, respectively), and $\{,\}$ is the Poisson bracket\footnote{ We are going to choose the convention of signs for the Poisson brackets  $\lbrace{k_\nu,x^\mu\rbrace}=\delta^\mu_\nu$ to be compatible with the cotangent bundle structure defined in the previous section~\cite{miron2001geometry}.}. We now enumerate a couple of theorems that we will use in the following, stating the relationship between nonlinear and affine connections, and relating the Hamiltonian with a metric in a generalized Hamilton space.

\begin{theorem}[Relationship between Hamiltonian and metric]
\label{t8}
The geodesic distance in momentum space:
\begin{align}\label{eq:vertgeomdist}
	D(x,k) \,=\, \int_0^{\tau_1} d\tau \sqrt{g^{\mu\nu}(x,k(\tau)) \dot k_\mu(\tau) \dot  k_{\nu}(\tau)}\,,
\end{align}
through the vertical curves~\eqref{eq:vertical}, is conserved along horizontal paths if the affine connection is metrical and   
\begin{equation}
{H^\rho}_{\mu\nu}(x,k)=\bar{\partial}^\rho N_{\mu\nu}(x,k)
\label{eq:affine_connection_n}
\end{equation} 
is satisfied. Then, the squared geodesic distance in momentum space can be identified with a Hamiltonian which indeed is autoparallel, i.e. it satisfies
\begin{equation}
	\delta_\mu \mathcal{H}(x,k)=0\,,
	\label{eq:casimir_delta}
\end{equation}   
\end{theorem}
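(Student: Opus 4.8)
The plan is to establish the claim in two stages: first, show that the squared geodesic distance $D^2(x,k)$ in momentum space is conserved along horizontal autoparallels under the stated hypotheses; second, reinterpret this conserved quantity as a Hamiltonian and verify that it satisfies the autoparallel condition $\delta_\mu \mathcal{H} = 0$. For the first stage, I would differentiate $D^2$ along a horizontal autoparallel curve $\gamma(\tau) = (x(\tau), k(\tau))$, using the defining relation $\dot k_\lambda = N_{\sigma\lambda}(x,k)\dot x^\sigma$ for horizontal curves together with $\ddot x^\mu = -H^\mu{}_{\nu\sigma}\dot x^\nu \dot x^\sigma$ from Eq.~\eqref{eq:horizontal_geodesics_curve_definition}. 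The key technical point is that along such curves the velocity components $\dot x^\mu$ serve as the natural momentum-fiber variables (via $\dot k_\mu$), so the conservation of $g^{\mu\nu}\dot k_\mu \dot k_\nu$ along the horizontal flow should reduce, after applying the chain rule $\tfrac{d}{d\tau} = \dot x^\mu \delta_\mu + \delta\dot k_\mu \bar\partial^\mu = \dot x^\mu \delta_\mu$ (the vertical part vanishing by horizontality), to a contraction of the $\delta$-derivative of the metric with the velocities, which is controlled by the metricity condition $g^{\mu\nu}{}_{;\rho} = 0$ and the explicit form of $H^\rho{}_{\mu\nu}$.

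The role of Eq.~\eqref{eq:affine_connection_n}, ${H^\rho}_{\mu\nu} = \bar\partial^\rho N_{\mu\nu}$, is what links the horizontal connection coefficients appearing in the geodesic equation to the nonlinear connection $N_{\mu\nu}$ that enters the definition of $\delta\dot k_\mu$ and hence of horizontality itself; without it the horizontal autoparallels of the affine connection would not coincide with the integral curves along which the momentum-space arc length is to be evaluated. Concretely, I would show that the combination $\ddot k_\mu - $ (Christoffel-type terms built from $C_\mu{}^{\nu\sigma}$) that appears in the vertical geodesic equation~\eqref{eq:vertical} is compatible with the horizontal evolution precisely when Eq.~\eqref{eq:affine_connection_n} holds, so that the vertical curves through each point fit together consistently as one moves along the base. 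Then the integrand $g^{\mu\nu}\dot k_\mu \dot k_\nu$, being the speed-squared of an affinely parametrized vertical autoparallel, is constant in $\tau$ along the vertical curve, and the content of the theorem is that this constant does not change as the base point is dragged along a horizontal autoparallel — which I would get by a direct $\delta_\rho$-differentiation using metricity and Eq.~\eqref{eq:affine_connection_n}.

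For the second stage, having shown $D^2(x,k)$ is conserved along all horizontal autoparallels, I would \emph{define} $\mathcal{H}(x,k) := D^2(x,k)$ (up to normalization) and argue that conservation along horizontal autoparallels through an arbitrary point, in arbitrary direction, forces $\delta_\mu \mathcal{H} = 0$: since $\tfrac{d}{d\tau}\mathcal{H} = \dot x^\mu \delta_\mu \mathcal{H}$ along a horizontal curve and this vanishes for every admissible $\dot x^\mu$ at every point, the horizontal derivative must vanish identically. One must check that the vertical part of the total derivative genuinely drops out — i.e. that $\delta\dot k_\mu = 0$ along horizontal autoparallels, which is exactly their defining property — and that the set of attainable directions $\dot x^\mu$ at a point spans the tangent space, which holds because the geodesic equation~\eqref{eq:horizontal_geodesics_curve_definition} admits a solution for any initial velocity. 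That $\mathcal{H}$ so defined is a legitimate Hamiltonian (generating the horizontal autoparallels via Eq.~\eqref{eq:H-eqs}) follows from the standard identification of the squared geodesic distance with the Hamilton function, as in the momentum-space geodesic-distance-as-Casimir construction referenced in the introduction.

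I expect the main obstacle to be the first stage: carefully organizing the $\tau$-derivative of $g^{\mu\nu}(x,k(\tau))\dot k_\mu \dot k_\nu$ so that the $\partial_\mu g^{\mu\nu}$ and $\bar\partial^\rho g^{\mu\nu}$ pieces recombine — via the definition $\delta_\mu = \partial_\mu + N_{\nu\mu}\bar\partial^\nu$ and the horizontality relation $\dot k_\nu = N_{\sigma\nu}\dot x^\sigma$ — into $\delta_\mu g^{\mu\nu}$ terms that are then killed by metricity, with the leftover curvature-type terms canceling against the $\ddot k_\mu$ contribution rewritten through Eq.~\eqref{eq:affine_connection_n}. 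The bookkeeping of which derivative ($\partial$, $\bar\partial$, or $\delta$) acts on the metric at each step, and the repeated use of $\bar\partial^\rho N_{\mu\nu} = H^\rho{}_{\mu\nu}$ to convert vertical derivatives of $N$ into horizontal connection coefficients, is where the argument is most delicate; the rest is essentially a consequence of "constant speed along geodesics" plus a density/genericity argument on the initial directions.
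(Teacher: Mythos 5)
The paper itself does not reprove this theorem --- it defers entirely to the cited reference --- so your proposal has to stand on its own. Its overall architecture is sound in one respect: your stage two is correct and is really just the observation that, along any horizontal lift, $\tfrac{d}{d\tau}\mathcal{H} = \dot x^\mu\,\partial_\mu\mathcal{H} + N_{\sigma\mu}\dot x^\sigma\,\bar\partial^\mu\mathcal{H} = \dot x^\mu\,\delta_\mu\mathcal{H}$, so conservation for all admissible initial velocities $\dot x^\mu$ is \emph{equivalent} to $\delta_\mu\mathcal{H}=0$. (Note in passing that the second-order equation $\ddot x^\mu = -H^\mu{}_{\nu\sigma}\dot x^\nu\dot x^\sigma$ plays no role here: conservation of a phase-space function along a curve involves only first derivatives of the curve, so invoking Eq.~\eqref{eq:horizontal_geodesics_curve_definition} is a red herring.)

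The genuine gap is in stage one, which is where all the content of the theorem lives, and which you only gesture at (``a direct $\delta_\rho$-differentiation using metricity and Eq.~\eqref{eq:affine_connection_n}'') without performing. Two problems. First, you conflate the two parameters in play: the $\dot k_\mu(\tau)$ in the integrand of $D$ is the tangent to the \emph{vertical} geodesic joining the origin of the fiber to $k$ at fixed $x$, whereas the relation $\dot k_\lambda = N_{\sigma\lambda}\dot x^\sigma$ describes the tangent to the \emph{horizontal} curve along which the base point is dragged; ``the conservation of $g^{\mu\nu}\dot k_\mu\dot k_\nu$ along the horizontal flow'' mixes these and, as written, is not the quantity $D^2$. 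Second, and more seriously, $D(x,k)$ is defined through an entire one-parameter family of vertical geodesics, one per base point; to compute $\delta_\rho D^2$ you must control how that whole family (endpoints, affine parametrization, and endpoint tangent vectors) varies under horizontal transport. This is precisely where metricity and $H^\rho{}_{\mu\nu}=\bar\partial^\rho N_{\mu\nu}$ enter --- one must show either that horizontal transport carries vertical geodesics to vertical geodesics of the same length, or, more economically, work with the eikonal identity $\mathcal{H}=\tfrac14\,\bar\partial^\mu\mathcal{H}\,g_{\mu\nu}\,\bar\partial^\nu\mathcal{H}$ of Eq.~\eqref{eq:casimir_metric}, apply $\delta_\rho$ to it, commute $[\delta_\rho,\bar\partial^\mu]=-\bar\partial^\mu N_{\rho\lambda}\bar\partial^\lambda = -H^\mu{}_{\rho\lambda}\bar\partial^\lambda$, use $\delta_\rho g_{\mu\nu}=H^\lambda{}_{\mu\rho}g_{\lambda\nu}+H^\lambda{}_{\nu\rho}g_{\mu\lambda}$ (metricity), and conclude that $\delta_\rho\mathcal{H}$ obeys a homogeneous linear first-order equation with vanishing data at $k=0$, hence vanishes. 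Your proposal names the ingredients but never assembles them, and the ``constant speed along geodesics'' plus ``drag the base point'' framing does not by itself supply the variation-of-geodesics argument that the theorem actually requires.
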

\begin{proof}
     Proof can be found in~\cite{Relancio:2020rys}. 
\end{proof}

From a physical point of view, in order to recover the Hamiltonian of SR and GR (when the metric does not depend on the momenta), we will consider the Hamiltonian to be the squared distance in momentum space\footnote{But notice that, in fact,  any function of such a (squared) distance can also be regarded as the Hamiltonian.}.  Furthermore, it will be the solution of a differential equation.  
\begin{proposition} [Differential equation for distance in momentum space]
  The squared distance in momentum space, that can be identified with the Hamiltonian $\mathcal{H}(x,k)$, satisfies the following differential equation~\cite{Relancio:2020zok} 
\begin{equation}
	\mathcal{H}(x,k)=\frac{1}{4}\bar{\partial}^\mu \mathcal{H}(x,k) g_{\mu\nu} (x,k)\bar{\partial}^\nu \mathcal{H}(x,k) \,.
	\label{eq:casimir_metric}
\end{equation} 
\end{proposition}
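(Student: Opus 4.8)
The plan is to recognise \eqref{eq:casimir_metric} as the eikonal (Hamilton--Jacobi) equation obeyed by a geodesic distance function, here for the family of fibrewise metrics $g^{\mu\nu}(x,k)$ on $T^{*}M$. First I would fix the base point $x=x_{0}$ and regard the fibre $T^{*}_{x_{0}}M$, coordinatised by the momenta $k_{\mu}$, as a (pseudo-)Riemannian manifold whose metric coefficients are $g^{\mu\nu}(x_{0},k)$, so that the coefficients of the inverse quadratic form are $g_{\mu\nu}(x_{0},k)$. Comparing \eqref{eq:affine_connection_p} with the Levi-Civita formula, the $-C_{\rho}{}^{\mu\nu}(x_{0},k)$ are exactly the Christoffel symbols of this metric; hence the vertical autoparallels \eqref{eq:vertical} are its geodesics, and along any of them the first integral $g^{\mu\nu}(x_{0},k)\dot k_{\mu}\dot k_{\nu}$ is constant. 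Let $D(x_{0},k)$ be the length \eqref{eq:vertgeomdist} of the minimising geodesic from the origin of the fibre to $k$, and set $\mathcal{H}(x,k)=D(x,k)^{2}$, the autoparallel Hamiltonian of Theorem~\ref{t8}.

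The heart of the argument is the first variation of arc length. Parametrising the geodesic from $0$ to $k$ by arc length $\tau\in[0,\tau_{1}]$ normalises the first integral to $g^{\mu\nu}(x_{0},k(\tau))\dot k_{\mu}(\tau)\dot k_{\nu}(\tau)\equiv 1$ and makes $D=\tau_{1}$. Varying the endpoint $k_{\mu}$ at fixed initial point $0$, the integral term in the first variation vanishes because the curve is a geodesic, and the surviving boundary term gives
\begin{equation}
\bar\partial^{\mu} D(x_{0},k)\,=\,g^{\mu\nu}(x_{0},k)\,\dot k_{\nu}(\tau_{1})\,,
\end{equation}
with $\bar\partial^{\mu}=\partial/\partial k_{\mu}$ taken at fixed $x_{0}$, as it must be since the construction lives entirely inside the fibre. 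Contracting with $g_{\mu\nu}(x_{0},k)$ and using the arc-length normalisation yields the eikonal equation $g_{\mu\nu}\,\bar\partial^{\mu}D\,\bar\partial^{\nu}D=1$, and then, since $\bar\partial^{\mu}\mathcal{H}=2D\,\bar\partial^{\mu}D$,
\begin{equation}
\tfrac14\,\bar\partial^{\mu}\mathcal{H}\,g_{\mu\nu}\,\bar\partial^{\nu}\mathcal{H}\,=\,D^{2}\,g_{\mu\nu}\,\bar\partial^{\mu}D\,\bar\partial^{\nu}D\,=\,D^{2}\,=\,\mathcal{H}\,,
\end{equation}
which is \eqref{eq:casimir_metric}; since $x_{0}$ was arbitrary, it holds throughout $T^{*}M$.

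The step I expect to be the real obstacle is the regularity of $D$: the first variation formula and the eikonal equation hold only where the distance function is differentiable, i.e. on the star-shaped domain around the origin of each fibre bounded by the (tangent) cut locus, which also presupposes that geodesics issuing from the origin exist and minimise, i.e. enough completeness of $g^{\mu\nu}(x_{0},\cdot)$. A second point is the Lorentzian signature relevant to physics: there ``distance'' means $\int\!\sqrt{|g^{\mu\nu}\dot k_{\mu}\dot k_{\nu}|}$, the first integral normalises to $\pm1$ on the timelike/spacelike sectors and to $0$ on the null one, so the computation produces $g_{\mu\nu}\bar\partial^{\mu}D\,\bar\partial^{\nu}D=\pm1$ and correspondingly $\mathcal{H}=\pm D^{2}$, the sign being fixed by demanding the on-shell value of $\mathcal{H}$ equal $m^{2}$; in every case the right-hand side of \eqref{eq:casimir_metric} comes out equal to $\mathcal{H}$, and for the null sector both sides vanish. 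For the maximally symmetric momentum spaces and their mild deformations that motivate this work these caveats are harmless, since $D^{2}$ is smooth on a full neighbourhood of the origin of each fibre; alternatively one can bypass the analysis altogether by promoting \eqref{eq:casimir_metric} to the definition of $\mathcal{H}$ and checking its compatibility with \eqref{eq:vertical}.
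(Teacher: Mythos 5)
Your argument is correct and is essentially the derivation the paper relies on: the paper offers no in-text proof of this proposition, deferring to~\cite{Relancio:2020zok}, where \eqref{eq:casimir_metric} is obtained in exactly this way, as the eikonal (Hamilton--Jacobi) identity $g_{\mu\nu}\,\bar\partial^\mu D\,\bar\partial^\nu D=1$ for the fibrewise geodesic distance combined with $\mathcal{H}=D^2$. Your identification of $-C_\rho{}^{\mu\nu}$ with the Christoffel symbols of the fibre metric, the first-variation computation of $\bar\partial^\mu D$, and your caveats about the cut locus and Lorentzian sign conventions are the standard ingredients and sensible refinements rather than departures.
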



\subsection{Deformed relativistic kinematics and curved momentum spaces}
\label{sec:drk}
A deformed relativistic kinematics (DRK) has three basic building blocks: a deformed composition law of momenta, deformed Lorentz transformations, and a deformed dispersion relation.  DRKs can be understood in terms of non-trivial curved momentum space geometry, which is defined by a maximally symmetric metric in the momentum space of the form: $\zeta = \zeta^{\mu\nu}(k) dk_\mu dk_\nu$. The maximal symmetry requirement implies the existence of ten isometries, which consist of four translations $\mathcal{T}$, and six Lorentz transformations $\mathcal{J}$ (boosts and rotations). All the ingredients can be understood from such a geometrical setup~\cite{Carmona:2019fwf, Arzano:2022har}. Firstly, translations induce a deformed law of composition of momenta \begin{align}\label{eq:DRKflat}
	(p\oplus q)_\mu = \mathcal{T}(p,q)_\mu\,.
\end{align}
Secondly, the generators of the Lorentz transformations (rotations and boosts) are the other six generators of the isometries of the momentum metric:
\begin{align}\label{eq:DRKflat1}
 k^\prime_\mu = \mathcal{J}(k,\Omega)_\mu\,.
\end{align}
Finally, the deformed dispersion relation, or Casimir function, is obtained as the square of the minimal geometric distance from the origin of momentum space to a given momentum $k$~\cite{AmelinoCamelia:2011bm}.

Let us now implement the DRKs on each point of a curved spacetime. If the Lorentzian metric of the curved spacetime has local coordinate components $a^{\mu\nu}(x)$, the geometrically deformed one depends also on momentum, and it is determined by~\cite{Relancio:2020zok}, 
\begin{equation}
	g^{\mu \nu}(x,k)=e^{\mu}{}_\alpha (x) \zeta^{\alpha \beta}(\bar{k})e^{\nu}{}_\beta (x)\,,
	\label{eq:definition_metric_cotangent}
\end{equation}
where we introduce the components of a tetrad of the metric $ a^{\mu\nu}(x)= e^\mu{}_\alpha(x) \eta^{\alpha\beta} e^\nu{}_\beta(x)$, and define  $\bar{k}_\alpha=e^\mu{}_\alpha(x) k_\mu$. In this way, the momentum space is also maximally symmetric if the momentum metric $\zeta^{\alpha \beta}(k)$ is as well~\cite{Relancio:2020rys}.

In~\cite{Relancio:2020rys,Pfeifer:2021tas} it was shown that not every choice of momentum coordinates is compatible with the geometrical construction in the cotangent bundle. In fact, only those metrics that are linear Lorentz invariant can be lifted to curved spacetimes. For these cases, the cotangent bundle metric reduces to
\begin{align}
	g_{\mu\nu}(x,k) = a_{\mu\nu}(x)  f_1\left(\frac{\bar k^2}{\Lambda^2}\right) + \frac{1}{\Lambda^2}  k_\mu k_\nu f_2\left(\frac{\bar k^2}{\Lambda^2}\right)\,,
 \label{eq:metric_Lorentz}
\end{align}
where $\bar k^2=k_\mu k_\nu a^{\mu\nu}(x)$, ad $f_i$ are generic functions of $\bar k^2/\Lambda^2$. Notice that \eqref{eq:metric_Lorentz}  will not explicitly depend on the tetrad of the spacetime but on the spacetime metric. 

\section{Developments in Hamilton and generalized Hamilton spaces}
\label{sec3}

In this section we discuss how all the geometrical ingredients of a generalized Hamiltonian space can be obtained once a generic metric is known. This goes beyond the understanding of~\cite{miron2001geometry},  where it is not clear how to connect the nonlinear connection, affine connections, metric, and Hamiltonian in a compatible way. Moreover, we discuss new families of possible autoparallel Hamiltonians, going beyond the results of the literature~\cite{Barcaroli:2015xda}. We also establish under which conditions the generalized Hamilton spaces are Hamilton ones.

\subsection{Revisiting the definition of the nonlinear connection in generalized Hamilton spaces}
\label{subsec:construction}
In Hamilton spaces, there exists a Hamiltonian function $\mathcal{H}$ from which it is possible to construct an associated Hamilton metric as~\cite{miron2001geometry}
\begin{equation}
g_H^{\mu\nu}(x,k)= \frac{1}{2}\bar \partial^\mu \bar \partial^\nu  \mathcal{H}(x,k)  \,,
    \label{eq:H_metric}
\end{equation}
so the nonlinear connection can be computed as~\cite{miron2001geometry}
\begin{equation}
N_{\mu\nu}= -\frac{1}{4}\left(\lbrace{g^H_{\mu\nu},\mathcal{H}\rbrace} +g^H_{\mu\rho} \bar \partial^\rho  \partial_\nu   \mathcal{H}  +g^H_{\nu\rho}\bar \partial^\rho  \partial_\mu  \mathcal{H}\right)\,.
    \label{eq:N_connection}
\end{equation}
For a given Hamiltonian, the horizontal and vertical affine connections can be easily obtained from the metric and the nonlinear connection. So one is able to construct all the main geometrical ingredients of these spaces starting only from a Hamiltonian. 

Let us see what happens for generalized Hamilton spaces. The main difficulty pointed out in~\cite{miron2001geometry} is that one cannot determine neither the nonlinear connection nor the horizontal affine connection directly from the metric in these geometries, contrary to what happens in Hamilton spaces. What we propose to solve this problem, once a metric is given, is:
\begin{enumerate}
    \item Defining the affine coefficients ${C_\rho}^{\mu\nu}(x,k)$ through \eqref{eq:affine_connection_p}, since they are the only ones which can be directly determined from the metric (notice that ${H^\rho}_{\mu\nu}(x,k)$ involves $\delta$ and, thus, the nonlinear connection, which is up to now unknown).
    \item Obtaining the squared geodesic distance in momentum space from the geodesic equations \eqref{eq:vertical}, which will be identified with the Hamiltonian as discussed in Th.~\ref{t8}.
    \item Computing the nonlinear coefficients from \eqref{eq:N_connection}. Notice  that \eqref{eq:N_connection} is valid either for Hamilton and generalized Hamilton spaces since the equation relates a Hamiltonian function with the connection, no matter which kind of space is being considering. 
    \item Determining the affine coefficients ${H^\rho}_{\mu\nu}(x,k)$ using Eq.~\eqref{eq:affine_connection_n}. They will be compatible with~\eqref{eq:affine_connection_st}.
 
\end{enumerate}

This establishes a new level of understanding on generalized Hamilton spaces, since we are able to describe all their geometrical ingredients for a given metric. This step program allows working with generalized Hamilton spaces in a systematic and simple way. Since their geometrical structure is richer than those of Hamilton spaces, generalized Hamilton spaces could open a new field of study we hope to explore further in future works.

\subsection{Autoparallel Hamiltonians}
In this subsection, we discuss some properties of autoparallel Hamiltonians. 
In~\cite{Barcaroli:2015xda} it is proved that if a  Hamiltonian is an  homogeneous function in momenta, then it is always autoparallel. Here, we go one step further and  state the following result:
\begin{theorem}
If a given Hamiltonian $\mathcal{H}$ is autoparallel, any function of it $\mathcal{\tilde H}=\mathcal{\tilde H}(\mathcal{H})$, will be also autoparallel. 
\label{theoremIII2}
\end{theorem}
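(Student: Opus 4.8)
The plan is to reduce the claim to the chain rule for the horizontal vector fields $\delta_\mu$. First I would note that $\delta_\mu=\partial_\mu+N_{\nu\mu}(x,k)\,\bar\partial^\nu$ is a first-order linear differential operator on functions on $T^*M$: a linear combination, with the smooth coefficients $N_{\nu\mu}$, of the coordinate vector fields $\partial_\mu$ and $\bar\partial^\nu$. Hence, for any smooth $F:\mathbb{R}\to\mathbb{R}$, the ordinary chain rule gives $\partial_\mu\big(F(\mathcal{H})\big)=F'(\mathcal{H})\,\partial_\mu\mathcal{H}$ and $\bar\partial^\nu\big(F(\mathcal{H})\big)=F'(\mathcal{H})\,\bar\partial^\nu\mathcal{H}$, so that $\delta_\mu\big(F(\mathcal{H})\big)=F'(\mathcal{H})\,\delta_\mu\mathcal{H}$. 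Using the autoparallel hypothesis $\delta_\mu\mathcal{H}=0$ of Eq.~\eqref{eq:casimir_delta}, the right-hand side vanishes, so $\tilde{\mathcal{H}}=\tilde{\mathcal{H}}(\mathcal{H})$ satisfies $\delta_\mu\tilde{\mathcal{H}}=0$ and is autoparallel. Equivalently, in the conserved-quantity language: if $\mathcal{H}$ is constant along every horizontal autoparallel \eqref{eq:horizontal_geodesics_curve_definition}, then so is any function $\tilde{\mathcal{H}}(\mathcal{H})$ of it.

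The point that requires care — and which I would treat explicitly — is that, in the construction of Sec.~\ref{subsec:construction}, the nonlinear connection is itself extracted from the Hamiltonian through Eq.~\eqref{eq:N_connection}, so replacing $\mathcal{H}\mapsto\tilde{\mathcal{H}}$ changes $N_{\mu\nu}$, and hence $\delta_\mu$, to some $\tilde N_{\mu\nu}$, $\tilde\delta_\mu$. To show that this does not affect the conclusion I would start from how the Hamilton metric transforms, $\tilde g_H^{\mu\nu}=\tfrac12\bar\partial^\mu\bar\partial^\nu\tilde{\mathcal{H}}=\tilde{\mathcal{H}}'(\mathcal{H})\,g_H^{\mu\nu}+\tfrac12\tilde{\mathcal{H}}''(\mathcal{H})\,\bar\partial^\mu\mathcal{H}\,\bar\partial^\nu\mathcal{H}$, which is a rank-one perturbation of $\tilde{\mathcal{H}}'\,g_H^{\mu\nu}$; invert it with the Sherman--Morrison formula to write $\tilde g^H_{\mu\nu}$ in terms of $g^H_{\mu\nu}$, of the covector $g^H_{\mu\rho}\bar\partial^\rho\mathcal{H}$, and of scalar factors built from $\mathcal{H}$; plug this into \eqref{eq:N_connection}; and verify that the difference $\tilde N_{\nu\mu}-N_{\nu\mu}$, once contracted with $\bar\partial^\nu\mathcal{H}$, reduces to terms proportional to $\delta_\rho\mathcal{H}$. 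Invoking $\delta_\rho\mathcal{H}=0$ a second time then yields $\tilde\delta_\mu\tilde{\mathcal{H}}=\tilde{\mathcal{H}}'\big(\partial_\mu\mathcal{H}+\tilde N_{\nu\mu}\bar\partial^\nu\mathcal{H}\big)=\tilde{\mathcal{H}}'\big(\partial_\mu\mathcal{H}+N_{\nu\mu}\bar\partial^\nu\mathcal{H}\big)=\tilde{\mathcal{H}}'\,\delta_\mu\mathcal{H}=0$.

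The hard part will be precisely that bookkeeping: inverting the perturbed Hamilton metric and keeping track of which of the many terms produced by \eqref{eq:N_connection} survive the contraction with $\bar\partial^\nu\mathcal{H}$. I expect it to be organized around three simplifications — the combination $g^H_{\mu\rho}\bar\partial^\rho\mathcal{H}$ recurs throughout; the Poisson bracket with $\tilde{\mathcal{H}}$ factors as $\tilde{\mathcal{H}}'$ times the bracket with $\mathcal{H}$ plus a piece that drops out on contraction; and the autoparallel identity $\partial_\mu\mathcal{H}=-N_{\nu\mu}\bar\partial^\nu\mathcal{H}$ lets one trade the $x$-derivatives of $\mathcal{H}$ for momentum derivatives — after which the obstruction terms cancel. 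If instead one regards the ambient nonlinear connection as fixed by the metric through the program of Sec.~\ref{subsec:construction}, with $\mathcal{H}$ the squared momentum distance, then the first paragraph already settles the statement.
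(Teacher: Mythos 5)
Your first paragraph is exactly the paper's proof: $\delta_\mu$ is a first-order operator, so $\delta_\mu\tilde{\mathcal{H}}=\frac{\partial\tilde{\mathcal{H}}}{\partial\mathcal{H}}\,\delta_\mu\mathcal{H}=0$, and this settles the statement since the nonlinear connection is regarded as part of the fixed ambient geometry. The worry you raise (and only sketch) in the remaining paragraphs --- that replacing $\mathcal{H}$ by $\tilde{\mathcal{H}}$ in Eq.~\eqref{eq:N_connection} might alter $N_{\mu\nu}$ --- is not part of this theorem's content; the paper addresses the coincidence of the two connections separately in Theorem~\ref{th:hhtilde}, so no Sherman--Morrison bookkeeping is required here.
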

\begin{proof}
   It lies in the fact that 
\begin{equation}
	\delta_\mu \mathcal{\tilde H}=\partial_\mu \mathcal{\tilde H}+N_{\nu\mu}\bar \partial^\nu \mathcal{\tilde H}=\frac{\partial \mathcal{\tilde H}}{\partial \mathcal{H}}\delta_\mu \mathcal{ H}=0\,.
 \label{eq:proof_hfunc}
\end{equation} 
\end{proof}

Note that in a Hamilton space, all the geometrical ingredients can be derived from the starting Hamiltonian. However, the squared distance in momentum space will in general not coincide with the starting Hamiltonian. But, we can assure that such a distance will be a function of it, since both quantities are conserved along horizontal curves.  It is important to drawn attention to the fact that, if one wants a deformed relativistic kinematics with an autoparallel Hamiltonian, showing a smooth limit to SR so when the deformation parameter $\Lambda$ goes to infinity one recovers the Hamiltonian of SR (i.e. $\mathcal{H}=k^2$), the only possible deformed Hamiltonian is a function of the original one. 

In the rest of the subsection, we discuss some properties of the geometrical objects 
for any kind of autoparallel Hamiltonians. In particular, we focus on Hamiltonians which are homogeneous functions of the momenta due to the particularly simple relations that can be obtained between the connections and the \textit{d-curvature} and horizontal curvature tensors. But before addressing such relations, it is necessary to introduce some concepts. On the one hand, the \textit{d-curvature} tensor is defined as~\cite{miron2001geometry}
\begin{equation}
R_{\mu\nu\rho}\,=\, \delta_\rho N_{\nu\mu}-\delta_\nu N_{\rho\mu}\,.
\label{eq:dtensor}
\end{equation} 
On the other hand,  the horizontal curvature tensor can be written\footnote{Note that the definition given in~\cite{Relancio:2020rys}, mandatory to find some consistent Einstein's equations compatible with a momentum independent energy-momentum tensor, differs from the one of~\cite{miron2001geometry}, which is instead proposed for its mathematical simplicity.  } as~\cite{Relancio:2020rys}
 \begin{equation}
{R^\mu}_{\nu \rho\sigma}\,=\, \delta_\sigma {H^\mu}_{\nu \rho} - \delta_\rho {H^\mu}_{\nu \sigma} +{H^\lambda}_{\nu \rho}{H^\mu}_{\lambda \sigma}-{H^\lambda}_{\nu \sigma}{H^\mu}_{\lambda \rho}\,.
\label{riemann_st}
\end{equation}
Now we have all the milestones to focus on various specific properties of some geometrical objects for autoparallel Hamiltonians.

\begin{theorem}
\label{th:hhtilde}
The nonlinear coefficients, the horizontal affine connection, and the horizontal curvature tensor of a pair of generalized Hamilton spaces with Hamiltonians $\mathcal{H}$ and $\mathcal{\tilde H}=\mathcal{\tilde H}(\mathcal{H})$, respectively, are the same. 
\end{theorem}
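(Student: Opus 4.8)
The plan is to trace how each geometrical ingredient is built from the Hamiltonian and show that the construction gives identical output for $\mathcal{H}$ and $\mathcal{\tilde H}=\mathcal{\tilde H}(\mathcal{H})$. First I would recall from Theorem~\ref{theoremIII2} that $\mathcal{\tilde H}$ is autoparallel whenever $\mathcal{H}$ is, so that both spaces sit in the framework where the four-step program of Sec.~\ref{subsec:construction} applies and the nonlinear connection is fixed by \eqref{eq:N_connection}. The key computational observation is that the Hamilton metrics are \emph{not} the same: writing $F'=\partial\mathcal{\tilde H}/\partial\mathcal{H}$ and $F''=\partial^2\mathcal{\tilde H}/\partial\mathcal{H}^2$, one has $\bar\partial^\mu\mathcal{\tilde H}=F'\bar\partial^\mu\mathcal{H}$ and hence
\begin{equation}
\tilde g_H^{\mu\nu}=\tfrac12\bar\partial^\mu\bar\partial^\nu\mathcal{\tilde H}=F' g_H^{\mu\nu}+\tfrac12 F''\,\bar\partial^\mu\mathcal{H}\,\bar\partial^\nu\mathcal{H}\,.
\end{equation}
So the first real task is to show that, despite $\tilde g_H\neq g_H$, the combination entering \eqref{eq:N_connection} is insensitive to the rescaling and the extra rank-one piece.

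Next I would substitute this $\tilde g_H^{\mu\nu}$ (and its inverse $\tilde g^H_{\mu\nu}$, which by Sherman--Morrison is $F'^{-1}g^H_{\mu\nu}$ plus a correction proportional to $g^H_{\mu\rho}\bar\partial^\rho\mathcal{H}\,g^H_{\nu\sigma}\bar\partial^\sigma\mathcal{H}$) into \eqref{eq:N_connection} for $\tilde N_{\mu\nu}$, replacing $\mathcal{H}$ by $\mathcal{\tilde H}$ everywhere. I expect the three terms to reorganize: the Poisson bracket $\{\tilde g^H_{\mu\nu},\mathcal{\tilde H}\}$ picks up $F'$ from $\mathcal{\tilde H}$ and $F'^{-1}$ from $\tilde g^H$, and the terms $\tilde g^H_{\mu\rho}\bar\partial^\rho\partial_\nu\mathcal{\tilde H}$ likewise involve $\bar\partial^\rho\partial_\nu\mathcal{\tilde H}=F'\bar\partial^\rho\partial_\nu\mathcal{H}+F''(\partial_\nu\mathcal{H})(\bar\partial^\rho\mathcal{H})+\dots$, whose $F'$ cancels against $F'^{-1}$ while all the $F''$-proportional pieces must be shown to cancel among themselves or to reproduce exactly the momentum-independent structure. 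A cleaner route, which I would try first, is to avoid $\tilde g_H$ altogether: since $\mathcal{\tilde H}$ is autoparallel, the nonlinear connection is equivalently characterized by demanding $\delta_\mu\mathcal{\tilde H}=0$ together with the metricity of the horizontal affine connection built from the \emph{given} metric $g_{\mu\nu}(x,k)$ in \eqref{eq:line_element_ps} (the true cotangent-bundle metric, not the Hamilton metric) via step~4 and Eq.~\eqref{eq:affine_connection_n}. Both $\mathcal{H}$ and $\mathcal{\tilde H}$ share the same $g_{\mu\nu}(x,k)$ by hypothesis, and from \eqref{eq:proof_hfunc} the condition $\delta_\mu\mathcal{\tilde H}=0$ is equivalent to $\delta_\mu\mathcal{H}=0$ for the \emph{same} $N_{\nu\mu}$; uniqueness of the connection satisfying these two requirements then forces $\tilde N_{\mu\nu}=N_{\mu\nu}$.

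Once $\tilde N_{\mu\nu}=N_{\mu\nu}$ is established, the rest is immediate: the horizontal derivative operators $\delta_\mu=\partial_\mu+N_{\nu\mu}\bar\partial^\nu$ coincide, so by \eqref{eq:affine_connection_st} the horizontal affine connection ${H^\rho}_{\mu\nu}$ — which depends only on $g_{\mu\nu}(x,k)$ (shared) and on $\delta_\mu$ (shared) — is the same, i.e.\ ${\tilde H^\rho}_{\mu\nu}={H^\rho}_{\mu\nu}$; and then \eqref{riemann_st}, being a polynomial expression in $\delta_\sigma$ and ${H^\mu}_{\nu\rho}$ only, yields ${\tilde R^\mu}_{\nu\rho\sigma}={R^\mu}_{\nu\rho\sigma}$. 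The main obstacle is the first part: making precise \emph{which} metric the construction uses at which step and invoking the right uniqueness statement, because the naive substitution into \eqref{eq:N_connection} with the Hamilton metric requires a nontrivial cancellation of all $F''$-terms. I would therefore lead with the uniqueness-of-nonlinear-connection argument and relegate the direct $F''$-cancellation check to a remark or footnote as a consistency verification.
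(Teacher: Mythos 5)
Your proposal follows essentially the same route as the paper's proof: the nonlinear connection coincides because both Hamiltonians are conserved along the same horizontal curves (equivalently, $\delta_\mu\mathcal{\tilde H}=(\partial\mathcal{\tilde H}/\partial\mathcal{H})\,\delta_\mu\mathcal{H}$, so compatibility with one is compatibility with the other), and then Eq.~\eqref{eq:affine_connection_n} transfers the identity of $N_{\mu\nu}$ to the horizontal affine connection and hence, via Eq.~\eqref{riemann_st}, to the curvature tensor. You are in fact more explicit than the paper about why the $N_{\mu\nu}$ coincide --- the paper simply asserts that both geometries are ``compatible with the same Hamiltonian'' --- and your remark that a naive substitution of the modified Hamilton metric into Eq.~\eqref{eq:N_connection} would require nontrivial cancellations of the $\partial^2\mathcal{\tilde H}/\partial\mathcal{H}^2$ terms is a legitimate subtlety the paper glosses over.
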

\begin{proof}
Following the discussion of Sec.~\ref{subsec:construction}, both geometries have the same nonlinear coefficients, since they are compatible with the same Hamiltonian. Moreover, if the Hamiltonian is autoparallel, we know that Eq.~\eqref{eq:affine_connection_n} holds. This means that the horizontal affine connection must be the same for both kind of spaces and, therefore, the horizontal curvature tensor must be also the same.
\end{proof}

\begin{corollary}
The nonlinear coefficients, the horizontal affine connection, and the horizontal curvature tensor of a pair of generalized and regular Hamilton spaces, both of them compatible with the same autoparallel Hamiltonian $\mathcal{H}$, are the same. 
\end{corollary}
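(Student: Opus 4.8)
The plan is to notice that the proof of Theorem~\ref{th:hhtilde} never really used that \emph{both} spaces were of generalized Hamilton type; it only used that (i) the nonlinear connection carrying a given Hamiltonian is fixed by formula~\eqref{eq:N_connection}, which — as stressed in Sec.~\ref{subsec:construction} — is valid for Hamilton and generalized Hamilton spaces alike, and (ii) for an autoparallel Hamiltonian with a metrical affine connection the horizontal coefficients obey~\eqref{eq:affine_connection_n}. Both ingredients are available here, so I would simply rerun that argument with one of the two partners being a regular Hamilton space, i.e.\ treat the corollary as Theorem~\ref{th:hhtilde} with $\tilde{\mathcal H}$ taken to be the identity map.

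Concretely, first I would observe that~\eqref{eq:N_connection} involves only $\mathcal{H}$ and its Hamilton metric $g_H^{\mu\nu}=\tfrac12\bar\partial^\mu\bar\partial^\nu\mathcal{H}$, both of which are the very same objects in the two geometries under comparison since they share the same Hamiltonian; hence the $N_{\mu\nu}$, and therefore the adapted derivatives $\delta_\mu=\partial_\mu+N_{\nu\mu}\bar\partial^\nu$, coincide. Next, because $\mathcal{H}$ is autoparallel in both spaces and the connections are taken metrical, Eq.~\eqref{eq:affine_connection_n} gives ${H^\rho}_{\mu\nu}=\bar\partial^\rho N_{\mu\nu}$ on each side, so with the common $N_{\mu\nu}$ the horizontal affine connections agree. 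Finally, the horizontal curvature tensor~\eqref{riemann_st} is built purely out of $\delta_\rho$, $\delta_\sigma$ and the ${H^\mu}_{\nu\rho}$, all already shown to coincide, so ${R^\mu}_{\nu\rho\sigma}$ is the same for both spaces.

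The one point that deserves care — the closest thing to an obstacle — is verifying that in a \emph{regular} Hamilton space the horizontal connection obtained from the Christoffel-type formula~\eqref{eq:affine_connection_st} with metric $g_H$ actually coincides with the one given by~\eqref{eq:affine_connection_n}; this is precisely what Theorem~\ref{t8} and the discussion around it guarantee once $\mathcal{H}$ is autoparallel, so no new computation is needed. Everything else is bookkeeping.
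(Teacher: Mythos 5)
Your proposal is correct and follows essentially the same route as the paper, which likewise treats the corollary as the special case of Theorem~\ref{th:hhtilde} in which $\tilde{\mathcal{H}}$ is the identity: common Hamiltonian $\Rightarrow$ common $N_{\mu\nu}$ via~\eqref{eq:N_connection}, then autoparallelism plus~\eqref{eq:affine_connection_n} forces the ${H^\rho}{}_{\mu\nu}$ and hence ${R^\mu}{}_{\nu\rho\sigma}$ to coincide. Your closing remark on the consistency of~\eqref{eq:affine_connection_st} with~\eqref{eq:affine_connection_n} for the regular Hamilton space is a sensible extra check that the paper leaves implicit.
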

\begin{proof}
This is a particular case of the previous theorem.
\end{proof}

\begin{theorem}
If a Hamiltonian $\mathcal{H}$ is autoparallel, the horizontal affine connection satisfies
\begin{equation}
\bar\partial^\sigma {H^\rho}_{\mu \nu}\bar\partial^\nu \mathcal{H}=0\,.
\label{eq:H_derivative_H}
\end{equation} 
\end{theorem}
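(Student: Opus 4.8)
The plan is to obtain \eqref{eq:H_derivative_H} by differentiating the autoparallel condition \eqref{eq:casimir_delta} twice along the fibres and using the relation \eqref{eq:affine_connection_n}, ${H^\rho}_{\mu\nu}=\bar\partial^\rho N_{\mu\nu}$, which holds for an autoparallel Hamiltonian. First, write \eqref{eq:casimir_delta} as $\partial_\mu\mathcal{H}+N_{\nu\mu}\bar\partial^\nu\mathcal{H}=0$ and apply $\bar\partial^\rho$. Using $\bar\partial^\rho N_{\nu\mu}={H^\rho}_{\nu\mu}$, the symmetries ${H^\rho}_{\mu\nu}={H^\rho}_{\nu\mu}$ and $N_{\mu\nu}=N_{\nu\mu}$, the three resulting terms reassemble exactly into the statement that the momentum gradient $\bar\partial^\rho\mathcal{H}$ is horizontally covariantly constant,
\begin{equation}
(\bar\partial^\rho\mathcal{H})_{;\mu}\,=\,\delta_\mu\bar\partial^\rho\mathcal{H}+{H^\rho}_{\mu\nu}\bar\partial^\nu\mathcal{H}\,=\,0\,.
\end{equation}
This is the momentum-space counterpart of the familiar fact that the gradient of a quantity conserved along the motion is parallel-transported along it.

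Next I would apply $\bar\partial^\sigma$ to this identity. The only delicate point is to commute $\bar\partial^\sigma$ through $\delta_\mu$: since $\delta_\mu=\partial_\mu+N_{\alpha\mu}\bar\partial^\alpha$ and $\bar\partial^\sigma N_{\alpha\mu}={H^\sigma}_{\alpha\mu}$, one has $[\bar\partial^\sigma,\delta_\mu]f={H^\sigma}_{\alpha\mu}\bar\partial^\alpha f$ on any function $f$. Carrying this out with $f=\bar\partial^\rho\mathcal{H}$ and using once more the symmetry of ${H^\rho}_{\mu\nu}$ in its lower indices, all the terms combine into the horizontal covariant derivative of the symmetric object $\bar\partial^\sigma\bar\partial^\rho\mathcal{H}$, and one is left with
\begin{equation}
(\bar\partial^\sigma\bar\partial^\rho\mathcal{H})_{;\mu}+(\bar\partial^\sigma{H^\rho}_{\mu\nu})\,\bar\partial^\nu\mathcal{H}\,=\,0\,.
\end{equation}
Hence \eqref{eq:H_derivative_H} is equivalent to $(\bar\partial^\sigma\bar\partial^\rho\mathcal{H})_{;\mu}=0$.

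Finally, I would establish this last equation. Up to a factor of two, $\bar\partial^\sigma\bar\partial^\rho\mathcal{H}$ is the Hamilton metric $g_H^{\sigma\rho}$ of \eqref{eq:H_metric}, and the nonlinear connection entering all the above manipulations is built out of $\mathcal{H}$ alone (through $g_H$) via \eqref{eq:N_connection}. Therefore, by the corollary following Theorem~\ref{th:hhtilde}, the horizontal affine connection ${H^\rho}_{\mu\nu}$ of our generalized Hamilton space coincides with that of the \emph{regular} Hamilton space associated with the same autoparallel $\mathcal{H}$; in the latter the fundamental metric \emph{is} $g_H$ and the affine connection is metrical, so $(g_H^{\sigma\rho})_{;\mu}=0$ there, and hence also here. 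Consequently $(\bar\partial^\sigma\bar\partial^\rho\mathcal{H})_{;\mu}=0$, and \eqref{eq:H_derivative_H} follows. I expect the crux to be precisely this last step: in a generalized Hamilton space $g_H$ need not coincide with the metric $g^{\mu\nu}$ defining the geometry, so its horizontal metricity cannot simply be read off from the postulate $g_{\mu\nu;\rho}=0$ and must instead be traced back, as above, to the specific form \eqref{eq:N_connection} of the nonlinear connection (which is particularly transparent when $\mathcal{H}$ is homogeneous of degree two, since then $g_H^{\sigma\rho}$ and ${H^\rho}_{\mu\nu}$ are homogeneous of degree zero).
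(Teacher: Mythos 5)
Your proof is correct and follows essentially the same route as the paper's: differentiate $\delta_\mu\mathcal{H}=0$ twice along the fibre, use ${H^\rho}_{\mu\nu}=\bar\partial^\rho N_{\mu\nu}$ together with the index symmetries to assemble the result into $\left(g_H^{\rho\sigma}\right)_{;\mu}+\tfrac{1}{2}\bar\partial^\sigma {H^\rho}_{\mu\nu}\bar\partial^\nu\mathcal{H}=0$, and kill the first term by metricity. The one point where you are more careful than the paper is the final step: the paper simply invokes ``the connection is metrical,'' although the postulated metricity is for $g^{\mu\nu}$ rather than for $g_H^{\mu\nu}$, whereas you correctly trace $\left(g_H^{\rho\sigma}\right)_{;\mu}=0$ back to the fact that the nonlinear and horizontal connections coincide with those of the regular Hamilton space built from the same autoparallel $\mathcal{H}$, where $g_H$ is the fundamental (and hence covariantly constant) metric.
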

\begin{proof}
By deriving twice with respect to momenta the condition $\delta_\mu \mathcal{H}=0$, and using Eq.~\eqref{eq:affine_connection_n}, one finds
\begin{equation}
\partial_\mu g_H^{\rho \sigma}+N_{\mu\nu}\bar\partial^\sigma g_H^{\rho \nu}+\frac{1}{2}\bar\partial^\sigma {H^\rho}_{\mu \nu}\bar\partial^\nu \mathcal{H}+ {H^\rho}_{\mu \nu}g_H^{\nu \sigma} + {H^\sigma}_{\mu \nu} g_H^{\nu \rho}=0\,.
\end{equation} 
Since $N_{\mu\nu}$ and ${H^\rho}_{\mu \nu}$ are symmetric under the exchange of $\mu$ and $\nu$, and due to the fact that 
\begin{equation}
    \bar\partial^\sigma g_H^{\rho \nu}= \bar\partial^\nu g_H^{\rho \sigma}\,,
\end{equation}
which can be easily seen from the definition~\eqref{eq:H_metric} of $g_H$, the previous equation can be written as 
\begin{equation}
{g^{\rho \sigma}_{H}}_{;\mu} +\frac{1}{2}\bar\partial^\sigma {H^\rho}_{\mu \nu}\bar\partial^\nu \mathcal{H}=0\,.
\end{equation} 
Since we are considering a metrical compatible connection, the covariant derivative of the metric vanishes, and thus the last term of the above equation must be zero.
\end{proof}

\begin{theorem}
\label{th_III.4}
If an autoparallel Hamiltonian $\mathcal{H}$ is a $r$-homogeneous function on momenta, or an  arbitrary function of a homogeneous function on momenta, then 
\begin{equation}
	N_{\mu \nu}=k_\rho {H^\rho}_{\mu \nu}
 \label{eq:nkh}
\end{equation} 
holds.
\end{theorem}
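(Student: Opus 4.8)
The plan is to reduce the identity \eqref{eq:nkh} to a statement about the homogeneity degree of the nonlinear connection. Since $\mathcal{H}$ is autoparallel, Eq.~\eqref{eq:affine_connection_n} gives ${H^\rho}_{\mu\nu}=\bar\partial^\rho N_{\mu\nu}$, so that $k_\rho {H^\rho}_{\mu\nu}=k_\rho\bar\partial^\rho N_{\mu\nu}$; by Euler's theorem for homogeneous functions this equals $N_{\mu\nu}$ precisely when $N_{\mu\nu}$ is a $1$-homogeneous function of the momenta. Hence it suffices to establish that $N_{\mu\nu}$ is $1$-homogeneous in $k$ under each of the two hypotheses.

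First I would treat the case where $\mathcal{H}$ is itself $r$-homogeneous in $k$. By Euler's theorem $\bar\partial^\mu\mathcal{H}$ is $(r-1)$-homogeneous, so the Hamilton metric $g_H^{\mu\nu}=\frac{1}{2}\bar\partial^\mu\bar\partial^\nu\mathcal{H}$ of Eq.~\eqref{eq:H_metric} is $(r-2)$-homogeneous and its inverse $g^H_{\mu\nu}$ is $(2-r)$-homogeneous. I would then insert these degrees into the defining formula \eqref{eq:N_connection} for $N_{\mu\nu}$ and count the homogeneity degree of each of its three terms, using that a derivative $\partial_\mu$ with respect to the $x$ coordinates does not change the degree in $k$ whereas a momentum derivative $\bar\partial^\mu$ lowers it by one. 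The term $g^H_{\mu\rho}\bar\partial^\rho\partial_\nu\mathcal{H}$ then has degree $(2-r)+(r-1)=1$, and expanding $\{g^H_{\mu\nu},\mathcal{H}\}=\bar\partial^\sigma g^H_{\mu\nu}\,\partial_\sigma\mathcal{H}-\partial_\sigma g^H_{\mu\nu}\,\bar\partial^\sigma\mathcal{H}$ shows that both summands are again of degree $1$ (respectively $(1-r)+r$ and $(2-r)+(r-1)$). Therefore $N_{\mu\nu}$ is $1$-homogeneous in $k$, and the reduction above yields \eqref{eq:nkh}.

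For the case $\mathcal{H}=\mathcal{\tilde H}(\mathcal{H}_0)$ with $\mathcal{H}_0$ a homogeneous function of the momenta, I would argue by reduction to the previous case: $\mathcal{H}_0$ is autoparallel (a homogeneous Hamiltonian always is, as recalled from~\cite{Barcaroli:2015xda}), so by Theorem~\ref{th:hhtilde} the nonlinear coefficients and the horizontal affine connection of the generalized Hamilton space built from $\mathcal{H}$ coincide with those built from $\mathcal{H}_0$; since \eqref{eq:nkh} holds for $\mathcal{H}_0$ by the first part, it holds for $\mathcal{H}$ as well.

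The only delicate point is the bookkeeping in the middle step: one must keep track of the fact that matrix inversion flips the sign of the homogeneity degree (so $g^H_{\mu\nu}$ is $(2-r)$-homogeneous, not $(r-2)$-homogeneous), and that inside the Poisson bracket only the momentum derivatives shift the degree; everything else follows immediately from Euler's theorem together with Eq.~\eqref{eq:affine_connection_n}. One also implicitly uses that $g_H^{\mu\nu}$ is invertible, which is needed for \eqref{eq:N_connection} to make sense in the first place.
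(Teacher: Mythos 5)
Your argument is correct, but it is a genuinely different proof from the paper's. The paper never touches the explicit formula \eqref{eq:N_connection}: it applies $\delta_\mu$ to the Euler identity $k_\rho\bar\partial^\rho\mathcal{\tilde H}=k_\rho\bar\partial^\rho\mathcal{H}\,\partial\mathcal{\tilde H}/\partial\mathcal{H}$, uses the commutator $[\delta_\mu,\bar\partial^\rho]=-\bar\partial^\rho N_{\mu\lambda}\bar\partial^\lambda$ to arrive at $\bar\partial^\sigma\mathcal{H}\left(N_{\mu\sigma}-k_\rho{H^\rho}_{\mu\sigma}\right)=0$, and then strips off the factor $\bar\partial^\sigma\mathcal{H}$ by differentiating once more in momentum and invoking Eqs.~\eqref{eq:affine_connection_n} and \eqref{eq:H_derivative_H} together with the invertibility of $g_H^{\mu\nu}$; both cases of the hypothesis are handled in a single computation because the factor $\partial\mathcal{\tilde H}/\partial\mathcal{H}$ simply cancels. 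You instead prove that $N_{\mu\nu}$ is $1$-homogeneous by counting degrees directly in \eqref{eq:N_connection} and then apply Euler's theorem, reducing the composite case to the homogeneous one via Th.~\ref{th:hhtilde}. Your route is more transparent and it inverts the paper's logical order: the paper derives the $1$-homogeneity of $N_{\mu\nu}$ as a \emph{consequence} of \eqref{eq:nkh} in the subsequent theorem, whereas you obtain it first and read off \eqref{eq:nkh} from it. The price is that your proof is tied to the specific definition of the nonlinear connection through \eqref{eq:N_connection} (which is indeed how the paper fixes it in its step program), while the paper's argument works for any nonlinear connection rendering $\mathcal{H}$ autoparallel and satisfying \eqref{eq:affine_connection_n}. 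Your bookkeeping is sound — the inverse Hamilton metric carries degree $2-r$, all three terms of \eqref{eq:N_connection} come out of degree $1$, and the direct degree count would fail for $\mathcal{\tilde H}(\mathcal{H}_0)$ since $\bar\partial^\mu\bar\partial^\nu\mathcal{\tilde H}$ is not homogeneous, so your detour through Th.~\ref{th:hhtilde} for that case is both necessary and legitimate. Both proofs share the implicit regularity assumption that $g_H^{\mu\nu}$ is invertible.
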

\begin{proof}
Let $\mathcal{H}$ be a $r$-homogeneous function on momenta. Thus, by definition~\cite{Barcaroli:2015xda,miron2001geometry}
\begin{equation}
	r \mathcal{H}=k_\rho \bar\partial^\rho \mathcal{H}\,
 \label{eq:homogeneousH}
\end{equation}
holds. Therefore, if our Hamiltonian is a function $\mathcal{\tilde H}(\mathcal{H})$ of a homogeneous function of degree $r$ on momenta, we have
\begin{equation}
k_\rho \bar\partial^\rho \mathcal{\tilde H}=k_\rho \bar\partial^\rho \mathcal{H} \frac{\partial \mathcal{\tilde H}}{\partial \mathcal{H}}\,.
\end{equation} 
Since in this case the Hamiltonian is autoparallel, it satisfies
\begin{equation}
0=\delta_\mu \mathcal{\tilde H}=\delta_\mu \left(k_\rho \bar\partial^\rho \mathcal{H} \frac{\partial \mathcal{\tilde H}}{\partial \mathcal{H}}\right)\,.
\end{equation} 
We can rewrite this last equation as
\begin{equation}
0= \delta_\mu k_\rho \bar\partial^\rho \mathcal{H} \frac{\partial \mathcal{\tilde H}}{\partial \mathcal{H}} + k_\rho \delta_\mu \bar\partial^\rho \mathcal{H} \frac{\partial \mathcal{\tilde H}}{\partial \mathcal{H}}  + k_\rho \bar\partial^\rho \mathcal{H} \delta_\mu\frac{\partial \mathcal{\tilde H}}{\partial \mathcal{H}} = \left(N_{\mu\sigma} \bar\partial^\sigma \mathcal{H} -k_\rho \bar\partial^\rho N_{\mu\sigma}\bar\partial^\sigma \mathcal{H}\right)\frac{\partial \mathcal{\tilde H}}{\partial \mathcal{H}}\, ,
\end{equation} 
where we have used that~\cite{miron2001geometry}
\begin{equation}
    [\delta_\mu,\bar\partial^\rho]=-\bar\partial^\rho N_{\mu\lambda}\bar\partial^\lambda\,.
\end{equation}
and the fact that
\begin{equation}
   \delta_\mu\frac{\partial \mathcal{\tilde H}}{\partial \mathcal{H}}= \frac{\partial^2 \mathcal{\tilde H}}{\partial \mathcal{H}^2}  \delta_\mu \mathcal{H}=0\,.
\end{equation}
Thus, we obtain the relation:
\begin{equation}
    \bar\partial^\sigma \mathcal{H} \left(N_{\mu\sigma}-k_\rho {H^\rho}_{\mu\sigma}\right)=0\, .
\end{equation}
By deriving the previous expression with respect to momentum, one finds
\begin{equation}
0= 2 g^{\sigma\lambda}_H \left(N_{\mu\sigma}-k_\rho {H^\rho}_{\mu\sigma}\right)+\bar\partial^\sigma \mathcal{H}\left(\bar\partial^\lambda{N}_{\mu\sigma}-{H^\lambda}_{\mu\sigma}-k_\rho \bar\partial^\lambda{H^\rho}_{\mu\sigma}\right)=0 \,.
\end{equation}
By using Eqs.~\eqref{eq:affine_connection_n},\eqref{eq:H_derivative_H} one arrives to Eq.~\eqref{eq:nkh}.

\end{proof}

\begin{theorem}
If an autoparallel Hamiltonian $\mathcal{H}$ is a homogeneous function on momenta, or an  arbitrary function of a homogeneous function on momenta, the nonlinear connection and the horizontal affine connection are homogeneous on momentum of degree 1 and 0, respectively.
\end{theorem}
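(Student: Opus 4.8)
The plan is to obtain both homogeneity statements directly from the two structural identities already established for autoparallel Hamiltonians, without touching the metric again. First I would use that, since $\mathcal{H}$ is autoparallel, Eq.~\eqref{eq:affine_connection_n} holds, so ${H^\rho}_{\mu\nu}=\bar\partial^\rho N_{\mu\nu}$. Second, since $\mathcal{H}$ is a homogeneous function on momenta or an arbitrary function of one, Theorem~\ref{th_III.4} applies and gives $N_{\mu\nu}=k_\rho{H^\rho}_{\mu\nu}$. Inserting the first relation into the second collapses everything to the single identity
\begin{equation}
N_{\mu\nu}(x,k)=k_\rho\,\bar\partial^\rho N_{\mu\nu}(x,k)\,,
\end{equation}
which is precisely Euler's relation asserting that $N_{\mu\nu}$ is homogeneous of degree $1$ in the momenta.

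I would then make the passage from Euler's relation to genuine homogeneity explicit via the standard converse argument: for fixed $x$ and fixed indices, set $h(t):=t^{-1}N_{\mu\nu}(x,tk)$ for $t>0$; differentiating and using the identity above evaluated at $tk$ gives $h'(t)=0$, hence $N_{\mu\nu}(x,tk)=t\,N_{\mu\nu}(x,k)$, i.e. degree $1$. For the horizontal affine connection, since ${H^\rho}_{\mu\nu}=\bar\partial^\rho N_{\mu\nu}$ is a momentum derivative of a degree-$1$ homogeneous object it is automatically homogeneous of degree $0$; equivalently one checks $k_\sigma\bar\partial^\sigma{H^\rho}_{\mu\nu}=k_\sigma\bar\partial^\sigma\bar\partial^\rho N_{\mu\nu}=\bar\partial^\rho\!\left(k_\sigma\bar\partial^\sigma N_{\mu\nu}\right)-\bar\partial^\rho N_{\mu\nu}=\bar\partial^\rho N_{\mu\nu}-\bar\partial^\rho N_{\mu\nu}=0$, which is Euler's relation for degree $0$.

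No serious obstacle is expected: the whole content has been prepacked into Theorem~\ref{th_III.4} and Eq.~\eqref{eq:affine_connection_n}, so what remains is the elementary manipulation above. The only point deserving a word of justification is the use of the converse of Euler's theorem, which is legitimate here because $N_{\mu\nu}$ is smooth on the punctured fibers (being built from a Hamiltonian that is smooth away from $k=0$), the usual working hypothesis for homogeneous Hamiltonians in this framework. It is also worth remarking that the degree $r$ of the underlying homogeneous function does not enter the conclusion: the degrees $1$ and $0$ are dictated solely by the relation $N_{\mu\nu}=k_\rho\bar\partial^\rho N_{\mu\nu}$.
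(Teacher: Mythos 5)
Your proposal is correct and follows essentially the same route as the paper: both combine $N_{\mu\nu}=k_\rho{H^\rho}_{\mu\nu}$ (Theorem~\ref{th_III.4}) with ${H^\rho}_{\mu\nu}=\bar\partial^\rho N_{\mu\nu}$ (Eq.~\eqref{eq:affine_connection_n}) to obtain the degree-$1$ Euler relation for $N_{\mu\nu}$, and then differentiate once more in momentum to get the degree-$0$ Euler relation for ${H^\rho}_{\mu\nu}$. Your explicit converse-of-Euler argument via $h(t)=t^{-1}N_{\mu\nu}(x,tk)$ is a welcome tightening of a step the paper leaves implicit, but it does not change the substance of the proof.
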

\begin{proof}
From Eqs.~\eqref{eq:nkh} and \eqref{eq:affine_connection_n}, we know that 
\begin{equation}
    N_{\mu\nu}=k_\rho {H^\rho}_{\mu\nu}=k_\rho \bar \partial^\rho N_{\mu\nu}\,.
\end{equation}
Taking into account the definition \eqref{eq:homogeneousH}, one easily finds out that the nonlinear connection must be homogeneous on momentum of degree 1. Furthermore, by deriving the previous expression with respect to momentum, we see that
\begin{eqnarray}
       \bar\partial^\lambda   N_{\mu\nu}&=& \bar\partial^\lambda  (k_\rho {H^\rho}_{\mu\nu})\\
       {H^\lambda}_{\mu\nu}&=& {H^\lambda}_{\mu\nu}+k_\rho \bar\partial^\lambda {H^\rho}_{\mu\nu}\\
      k_\rho \bar\partial^\lambda {H^\rho}_{\mu\nu}&=&k_\rho \bar\partial^\rho {H^\lambda}_{\mu\nu}=0 \,,
\end{eqnarray}
where in the second and third steps we have used Eq.~\eqref{eq:affine_connection_n}. Finally, by using again the definition \eqref{eq:homogeneousH}, we conclude that the horizontal affine connection must be homogeneous on momentum of degree 0.
\end{proof}

\begin{theorem}\label{th3.7}
If an autoparallel Hamiltonian $\mathcal{H}$ is a homogeneous function of degree 2 on momenta, the connections of a Hamilton space defined by it are of the Cartan type~\cite{miron2001geometry}, i.e.
\begin{equation}
	N_{\mu \nu}=k_\rho {H^\rho}_{\mu \nu}\,,\qquad k_\lambda {C_\mu}^{\lambda\nu}=0\,.
 \label{eq:nlinearcartan}
\end{equation} 
\end{theorem}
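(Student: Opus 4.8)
The plan is to treat the two claimed identities separately, since they require very different amounts of work. The first, $N_{\mu\nu}=k_\rho {H^\rho}_{\mu\nu}$, needs nothing new: a function that is homogeneous of degree $2$ on momenta is in particular an $r$-homogeneous function with $r=2$, hence a special instance of the hypothesis of Theorem~\ref{th_III.4}, so \eqref{eq:nkh} applies verbatim. The real content of the statement is therefore the second identity, $k_\lambda {C_\mu}^{\lambda\nu}=0$, and that is where I would concentrate.

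For the second identity the first step is to turn the degree-$2$ homogeneity of $\mathcal{H}$ into homogeneity statements for the Hamilton metric $g_H^{\mu\nu}=\tfrac12\bar\partial^\mu\bar\partial^\nu\mathcal{H}$ of \eqref{eq:H_metric}. Starting from Euler's relation \eqref{eq:homogeneousH} with $r=2$, namely $k_\rho\bar\partial^\rho\mathcal{H}=2\mathcal{H}$, one differentiation with respect to momenta gives $k_\rho\bar\partial^\rho\bar\partial^\sigma\mathcal{H}=\bar\partial^\sigma\mathcal{H}$ (so $\bar\partial^\sigma\mathcal{H}$ is $1$-homogeneous), and a second differentiation gives $k_\rho\bar\partial^\rho\bar\partial^\sigma\bar\partial^\lambda\mathcal{H}=0$ (so $g_H^{\sigma\lambda}$ is $0$-homogeneous). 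I would record the two consequences actually needed: $k_\mu g_H^{\mu\nu}=\tfrac12\bar\partial^\nu\mathcal{H}$ and $k_\rho\bar\partial^\rho g_H^{\mu\nu}=0$.

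The second step is to insert the explicit formula~\eqref{eq:affine_connection_p} for ${C_\mu}^{\lambda\nu}$ and contract with $k_\lambda$, so that one only has to check that each of the three $\bar\partial$-terms kills $k_\lambda$. The term $k_\lambda\bar\partial^\lambda g^{\sigma\nu}$ vanishes at once by $0$-homogeneity of $g_H$; the term $k_\lambda\bar\partial^\nu g^{\sigma\lambda}$ I rewrite using the product rule as $\bar\partial^\nu\!\bigl(k_\lambda g^{\sigma\lambda}\bigr)-g^{\sigma\nu}=\bar\partial^\nu\!\bigl(\tfrac12\bar\partial^\sigma\mathcal{H}\bigr)-g^{\sigma\nu}=0$, and symmetrically $k_\lambda\bar\partial^\sigma g^{\lambda\nu}=\bar\partial^\sigma\!\bigl(\tfrac12\bar\partial^\nu\mathcal{H}\bigr)-g^{\sigma\nu}=0$. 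All three terms vanish, hence $k_\lambda {C_\mu}^{\lambda\nu}=0$, completing the proof.

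There is no genuine obstacle here; the argument is essentially bookkeeping with Euler's theorem. The only points that demand a little care are that the metric in \eqref{eq:affine_connection_p} must be read as the Hamilton metric $g_H$ of \eqref{eq:H_metric} — which is legitimate precisely because the statement concerns the Hamilton space \emph{defined by} $\mathcal{H}$ — and that one must iterate Euler's relation the right number of times to reach $0$-homogeneity of $g_H$. One could alternatively simply invoke the classical fact that a Cartan space, being by definition a Hamilton space whose Hamiltonian is $2$-homogeneous in momenta, automatically has connection coefficients of Cartan type~\cite{miron2001geometry}, but the direct computation above keeps the argument self-contained and makes the role of the autoparallel hypothesis (through Theorem~\ref{th_III.4}) explicit.
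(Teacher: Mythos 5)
Your proof is correct and follows essentially the same route as the paper: the first identity is obtained by specializing Theorem~\ref{th_III.4} to $r=2$, and the second by differentiating Euler's relation \eqref{eq:homogeneousH} twice. The only cosmetic difference is that the paper contracts using the simplified Hamilton-space form \eqref{eq:c_hamilton} of the vertical connection (keeping $r$ general, which shows $r=2$ is exactly the Cartan case), whereas you expand the general formula \eqref{eq:affine_connection_p} term by term — equivalent here because $\bar\partial^\rho g_H^{\mu\nu}$ is totally symmetric for a Hamilton metric.
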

\begin{proof}
If $\mathcal{H}$ is a homogeneous function on momenta, we know from the previous theorem that the first equation in \eqref{eq:nlinearcartan} is satisfied. 
In order to prove the second equality, by deriving Eq.~\eqref{eq:homogeneousH} with respect to momentum one finds
\begin{equation}
	\bar\partial^\mu \mathcal{H}+2 k_\rho g^{\rho\mu}_H=r \bar\partial^\mu \mathcal{H}\,,
\end{equation} 
so 
\begin{equation}
	\frac{r-1}{2} \bar\partial^\mu \mathcal{H}=k_\rho g^{\rho\mu}_H\,.
\end{equation} 
By deriving the previous equation once again with respect to momentum and using that, for Hamilton spaces, the vertical affine connection is given by\footnote{This can be easily obtained from the symmetries of the momentum derivatives of the metric of the Hamilton space~\eqref{eq:H_metric}.} ~\cite{miron2001geometry}
\begin{equation}
   {C_\rho}^{\mu\nu}=-\frac{1}{2} \bar\partial_\rho g^{\mu\nu}_H\,,
   \label{eq:c_hamilton}
\end{equation}
one obtains
\begin{equation}
	(r-2) g^{\mu\nu}_H+2 k_\rho C^{\rho\mu\nu}=0\,.
\end{equation} 
Therefore, Eq.~\eqref{eq:nlinearcartan} is satisfied if $r=2$.
\end{proof}

Let us study now some properties of the horizontal curvature tensor for autoparallel Hamiltonians.

\begin{theorem}
If a Hamiltonian $\mathcal{H}$ is autoparallel, 
\begin{equation}
	\bar \partial^\sigma R_{\mu \nu \rho}= {R^\sigma}_{\mu \nu \rho}
 \label{eq:rdr}
\end{equation} 
holds.
\end{theorem}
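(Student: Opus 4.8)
The plan is to apply $\bar\partial^\sigma$ to the definition~\eqref{eq:dtensor} of the $d$-curvature $R_{\mu\nu\rho}=\delta_\rho N_{\nu\mu}-\delta_\nu N_{\rho\mu}$ and reorganize the result into the form~\eqref{riemann_st} of the horizontal curvature tensor. The only inputs are Eq.~\eqref{eq:affine_connection_n}, which holds whenever $\mathcal{H}$ is autoparallel (by Th.~\ref{t8} together with the construction of Sec.~\ref{subsec:construction}) and reads ${H^\sigma}_{\mu\nu}=\bar\partial^\sigma N_{\mu\nu}$, the commutator $[\delta_\mu,\bar\partial^\rho]=-(\bar\partial^\rho N_{\mu\lambda})\,\bar\partial^\lambda$, and the symmetry ${H^\rho}_{\mu\nu}={H^\rho}_{\nu\mu}$. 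This is the only place the autoparallelism hypothesis enters.

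First I would write $\bar\partial^\sigma R_{\mu\nu\rho}=\bar\partial^\sigma\delta_\rho N_{\nu\mu}-\bar\partial^\sigma\delta_\nu N_{\rho\mu}$ and commute $\bar\partial^\sigma$ past $\delta_\rho$ (and analogously past $\delta_\nu$) with the commutator identity, obtaining $\bar\partial^\sigma\delta_\rho N_{\nu\mu}=\delta_\rho\bar\partial^\sigma N_{\nu\mu}+(\bar\partial^\sigma N_{\rho\lambda})(\bar\partial^\lambda N_{\nu\mu})$. Replacing each factor $\bar\partial^\sigma N_{\alpha\beta}$ by ${H^\sigma}_{\alpha\beta}$ via~\eqref{eq:affine_connection_n} then yields
\begin{equation}
\bar\partial^\sigma R_{\mu\nu\rho}=\delta_\rho {H^\sigma}_{\nu\mu}-\delta_\nu {H^\sigma}_{\rho\mu}+{H^\sigma}_{\rho\lambda}{H^\lambda}_{\nu\mu}-{H^\sigma}_{\nu\lambda}{H^\lambda}_{\rho\mu}\,.
\end{equation}

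The last step is to recognize this as ${R^\sigma}_{\mu\nu\rho}$: using ${H^\rho}_{\mu\nu}={H^\rho}_{\nu\mu}$ on both the differentiated and the bilinear terms relabels the lower indices so that $\delta_\rho {H^\sigma}_{\nu\mu}\to\delta_\rho {H^\sigma}_{\mu\nu}$, $-\delta_\nu {H^\sigma}_{\rho\mu}\to-\delta_\nu {H^\sigma}_{\mu\rho}$, ${H^\sigma}_{\rho\lambda}{H^\lambda}_{\nu\mu}\to {H^\lambda}_{\mu\nu}{H^\sigma}_{\lambda\rho}$ and $-{H^\sigma}_{\nu\lambda}{H^\lambda}_{\rho\mu}\to-{H^\lambda}_{\mu\rho}{H^\sigma}_{\lambda\nu}$, which is precisely~\eqref{riemann_st}. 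I do not expect a genuine obstacle; the only delicate points are the sign and index placement in $[\bar\partial^\sigma,\delta_\rho]=-[\delta_\rho,\bar\partial^\sigma]$ and the consistent use of the $\mu\leftrightarrow\nu$ symmetry of ${H^\rho}_{\mu\nu}$, since a careless relabelling would spoil the matching of the two quadratic terms. Note that the argument never invokes the explicit form of $N_{\mu\nu}$ or of the metric, so the identity holds for every autoparallel Hamiltonian, not just a homogeneous one.
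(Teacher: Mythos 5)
Your proof is correct and follows essentially the same route as the paper: differentiate the definition of $R_{\mu\nu\rho}$ with respect to momentum, substitute ${H^\sigma}_{\alpha\beta}=\bar\partial^\sigma N_{\alpha\beta}$, and use the symmetry of ${H^\rho}_{\mu\nu}$ to match \eqref{riemann_st}. The only cosmetic difference is that you organize the bookkeeping via the commutator $[\delta_\rho,\bar\partial^\sigma]$, whereas the paper first expands $\delta_\rho N_{\nu\mu}=\partial_\rho N_{\nu\mu}+N_{\lambda\rho}{H^\lambda}_{\nu\mu}$ and then recombines after differentiating; the content is identical.
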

\begin{proof}
By expanding Eq.~\eqref{eq:dtensor} and using Eq.~\eqref{eq:affine_connection_n}, one finds 
  \begin{equation}
       R_{\mu \nu \rho}= \partial_\rho N_{\nu\mu}-\partial_\nu N_{\rho\mu}+N_{\lambda\rho} {H^\lambda}_{\nu\mu}-N_{\lambda\nu} {H^\lambda}_{\rho\mu}\,.
  \end{equation}
  By deriving this expression with respect to momentum one finds
  \begin{eqnarray}
      \bar\partial^\sigma R_{\mu \nu \rho} &=& \partial_\rho {H^\sigma}_{\nu\mu}-\partial_\nu {H^\sigma}_{\rho\mu}+{H^\sigma}_{\lambda\rho} {H^\lambda}_{\nu\mu}+N_{\lambda\rho}  \bar\partial^\sigma {H^\lambda}_{\nu\mu}-{H^\sigma}_{\lambda\nu} {H^\lambda}_{\rho\mu}-N_{\lambda\nu}  \bar\partial^\sigma {H^\lambda}_{\rho\mu}\\
      &=& \delta_\rho {H^\sigma}_{\nu\mu}-\delta_\nu {H^\sigma}_{\rho\mu}\delta+{H^\sigma}_{\lambda\rho} {H^\lambda}_{\nu\mu}-{H^\sigma}_{\lambda\nu} {H^\lambda}_{\rho\mu}= {R^\sigma}_{\mu \nu \rho}\,.
  \end{eqnarray}
\end{proof}

\begin{theorem}
If an autoparallel Hamiltonian $\mathcal{H}$ is a homogeneous function on momenta, 
\begin{equation}
	R_{\mu \nu \sigma}=k_\rho {R^\rho}_{\mu \nu \sigma}
 \label{eq:rkr}
\end{equation} 
holds.
\end{theorem}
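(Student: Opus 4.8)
The plan is to combine two facts already in hand: the identity $\bar\partial^\sigma R_{\mu\nu\rho}={R^\sigma}_{\mu\nu\rho}$, established in the preceding theorem for any autoparallel Hamiltonian, together with Euler's homogeneity relation. The crux is to show that, under the present hypothesis, the $d$-curvature tensor $R_{\mu\nu\sigma}$ is itself homogeneous of degree $1$ on the momenta; once this is known, contracting the identity with $k_\rho$ gives the result in one line.

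First I would recall that, by the theorem stating that for an autoparallel Hamiltonian which is homogeneous (or a function of a homogeneous function) on momenta the nonlinear connection and the horizontal affine connection are homogeneous of degrees $1$ and $0$ respectively, we have $k_\sigma\bar\partial^\sigma N_{\mu\nu}=N_{\mu\nu}$ and $k_\sigma\bar\partial^\sigma {H^\rho}_{\mu\nu}=0$. Expanding the $d$-curvature \eqref{eq:dtensor} with $\delta_\rho=\partial_\rho+N_{\lambda\rho}\bar\partial^\lambda$ and using \eqref{eq:affine_connection_n} one writes
\begin{equation}
R_{\mu\nu\sigma}=\partial_\sigma N_{\nu\mu}-\partial_\nu N_{\sigma\mu}+N_{\lambda\sigma}{H^\lambda}_{\nu\mu}-N_{\lambda\nu}{H^\lambda}_{\sigma\mu}\,.
\end{equation}
Each term here is homogeneous of degree $1$ on momenta: the spacetime derivative $\partial$ does not change the momentum degree of $N_{\nu\mu}$, and in the products $N_{\lambda\sigma}{H^\lambda}_{\nu\mu}$ the degrees add as $1+0=1$. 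Hence $R_{\mu\nu\sigma}$ is $1$-homogeneous, so by Euler's theorem (i.e. relation \eqref{eq:homogeneousH} with $r=1$) one has $k_\rho\bar\partial^\rho R_{\mu\nu\sigma}=R_{\mu\nu\sigma}$.

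Finally, I would invoke the previous theorem, $\bar\partial^\rho R_{\mu\nu\sigma}={R^\rho}_{\mu\nu\sigma}$ (Eq.~\eqref{eq:rdr}), which holds since $\mathcal{H}$ is autoparallel. Contracting with $k_\rho$ and substituting the homogeneity relation just obtained yields $R_{\mu\nu\sigma}=k_\rho{R^\rho}_{\mu\nu\sigma}$, which is Eq.~\eqref{eq:rkr}. The only step requiring a little care is the degree-counting for $R_{\mu\nu\sigma}$; everything else is a direct contraction, and one could equally well run the homogeneity argument on the unexpanded form $R_{\mu\nu\sigma}=\delta_\sigma N_{\nu\mu}-\delta_\nu N_{\sigma\mu}$ since $\delta_\sigma$ preserves the momentum degree of a $1$-homogeneous object.
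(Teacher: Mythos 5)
Your proof is correct, but it follows a genuinely different route from the paper's. The paper works directly from Eq.~\eqref{eq:nkh}: it applies $\delta_\rho$ to $N_{\nu\mu}=k_\sigma {H^\sigma}_{\nu\mu}$, uses $\delta_\rho k_\sigma=N_{\rho\sigma}=k_\lambda{H^\lambda}_{\rho\sigma}$ to pull the factor of $k_\sigma$ outside, and then recognizes the horizontal curvature tensor in the resulting combination after antisymmetrizing --- a one-step direct computation. You instead assemble the result from two previously established facts: the degree-$1$/degree-$0$ homogeneity of $N_{\mu\nu}$ and ${H^\rho}_{\mu\nu}$, which lets you conclude via a term-by-term degree count that $R_{\mu\nu\sigma}$ is $1$-homogeneous (so $k_\rho\bar\partial^\rho R_{\mu\nu\sigma}=R_{\mu\nu\sigma}$ by Euler), and the identity $\bar\partial^\rho R_{\mu\nu\sigma}={R^\rho}_{\mu\nu\sigma}$ of Eq.~\eqref{eq:rdr}, whose contraction with $k_\rho$ then closes the argument. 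Both inputs precede this theorem in the text, so there is no circularity, even though the paper itself only establishes the homogeneity of the $d$-curvature tensor \emph{after} (and by means of) Eq.~\eqref{eq:rkr}; you are in effect proving that piece of the subsequent theorem first and then reading Eq.~\eqref{eq:rkr} off from it. Your degree-counting is sound: $\partial_\sigma$ commutes with $k_\rho\bar\partial^\rho$, and the products $N_{\lambda\sigma}{H^\lambda}_{\nu\mu}$ carry degree $1+0=1$, as does $\delta_\sigma$ acting on a $1$-homogeneous object once $N$ is itself $1$-homogeneous. What your approach buys is modularity --- no new computation beyond contractions of already-proved identities; what the paper's buys is independence from the homogeneity theorem, needing only Eqs.~\eqref{eq:affine_connection_n} and \eqref{eq:nkh}.
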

\begin{proof}
From  Eq.~\eqref{eq:nkh} it is easy to see that
\begin{equation}
\delta_\rho N_{\mu\nu}=\delta_\rho\left(k_\sigma {H^\sigma}_{\mu\nu}\right)= N_{\rho\sigma} {H^\sigma}_{\mu\nu}+k_\sigma \delta_\rho {H^\sigma}_{\mu\nu}=k_\sigma \left( 
{H^\sigma}_{\rho\lambda}{H^\lambda}_{\mu\nu}+ \delta_\rho {H^\sigma}_{\mu\nu}\right) \,,\label{eqdeltaN1}
\end{equation} 
where we have used Eqs.~\eqref{eq:affine_connection_n} and~\eqref{eq:nkh}. Similarly, we could write
\begin{eqnarray}
 \delta_\nu N_{\rho\mu}=k_\sigma \left(
{H^\sigma}_{\beta\nu}{H^\beta}_{\rho\mu}+ \delta_\nu {H^\sigma}_{\rho\mu}\right)\,. \label{eqdeltaN2}
\end{eqnarray}
Now, plugging Eqs. \eqref{eqdeltaN1} and \eqref{eqdeltaN2} into the definition of the \textit{d-curvature} tensor of Eq.~\eqref{eq:dtensor}, one finds Eq.~\eqref{eq:rkr}. 
\end{proof}

\begin{theorem}
If an autoparallel Hamiltonian $\mathcal{H}$ is a homogeneous function on momenta, or an  arbitrary function of a homogeneous function on momenta, the \textit{d-curvature} and the horizontal curvature tensors are homogeneous on momentum of degree 1 and 0, respectively.
\end{theorem}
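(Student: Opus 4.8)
The plan is to read off the homogeneity degrees directly from the defining expressions \eqref{eq:dtensor} and \eqref{riemann_st}, taking as input the previous theorem, which guarantees that for an autoparallel Hamiltonian that is homogeneous on momenta (or an arbitrary function of such a function) the nonlinear connection $N_{\mu\nu}$ is momentum-homogeneous of degree $1$ and the horizontal affine connection ${H^\rho}_{\mu\nu}$ is momentum-homogeneous of degree $0$.

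The one structural fact I would establish first is that the operator $\delta_\mu=\partial_\mu+N_{\nu\mu}\bar\partial^\nu$ preserves the momentum-homogeneity degree under these hypotheses. Indeed, $\partial_\mu$ commutes with the Euler operator $k_\rho\bar\partial^\rho$, since one differentiates only in $x$ and the other acts only in $k$, so $\partial_\mu$ sends a degree-$s$ function (in $k$, at fixed $x$) to a degree-$s$ function; and if $f$ is homogeneous of degree $s$ then $\bar\partial^\nu f$ is homogeneous of degree $s-1$, whence $N_{\nu\mu}\bar\partial^\nu f$, a product of a degree-$1$ and a degree-$(s-1)$ factor, is again homogeneous of degree $s$. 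Applying this to \eqref{eq:dtensor}, each term of $R_{\mu\nu\rho}=\delta_\rho N_{\nu\mu}-\delta_\nu N_{\rho\mu}$ is a degree-preserving operator acting on the degree-$1$ object $N$, so $R_{\mu\nu\rho}$ is homogeneous of degree $1$. Applying it to \eqref{riemann_st}, the first two terms of ${R^\mu}_{\nu\rho\sigma}$ are degree-preserving derivatives of the degree-$0$ object $H$, while ${H^\lambda}_{\nu\rho}{H^\mu}_{\lambda\sigma}$ and ${H^\lambda}_{\nu\sigma}{H^\mu}_{\lambda\rho}$ are products of two degree-$0$ factors; hence ${R^\mu}_{\nu\rho\sigma}$ is homogeneous of degree $0$.

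As a cross-check in the purely homogeneous case, one can instead combine the two previously proven identities: contracting $\bar\partial^\sigma R_{\mu\nu\rho}={R^\sigma}_{\mu\nu\rho}$ from \eqref{eq:rdr} with $k_\sigma$ and using $R_{\mu\nu\rho}=k_\sigma{R^\sigma}_{\mu\nu\rho}$ from \eqref{eq:rkr} gives $k_\sigma\bar\partial^\sigma R_{\mu\nu\rho}=R_{\mu\nu\rho}$, i.e.\ $R_{\mu\nu\rho}$ is homogeneous of degree $1$, and then ${R^\sigma}_{\mu\nu\rho}=\bar\partial^\sigma R_{\mu\nu\rho}$ is automatically homogeneous of degree $0$. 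I do not anticipate any genuine obstacle here; the only point deserving a little care is to run the argument through the degree bookkeeping of $\delta_\mu$ (and of products of homogeneous functions) rather than through an Euler identity for $\mathcal{H}$ itself, so that the ``arbitrary function of a homogeneous function'' case, in which $\mathcal{H}$ need not be homogeneous, is handled on exactly the same footing.
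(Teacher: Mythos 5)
Your proof is correct, but it takes a genuinely different route from the paper's. The paper obtains the degree-$0$ homogeneity of ${R^\mu}_{\nu\rho\sigma}$ by differentiating the identity $R_{\mu\nu\sigma}=k_\rho{R^\rho}_{\mu\nu\sigma}$ of Eq.~\eqref{eq:rkr} and substituting $\bar\partial^\sigma R_{\mu\nu\rho}={R^\sigma}_{\mu\nu\rho}$ from Eq.~\eqref{eq:rdr}, which yields $k_\rho\bar\partial^\lambda{R^\rho}_{\mu\nu\sigma}=k_\rho\bar\partial^\rho{R^\lambda}_{\mu\nu\sigma}=0$, and then gets degree $1$ for $R_{\mu\nu\sigma}$ from $k_\rho\bar\partial^\rho R_{\mu\nu\sigma}=k_\rho{R^\rho}_{\mu\nu\sigma}=R_{\mu\nu\sigma}$ --- exactly your ``cross-check.'' Your main argument instead reads the degrees directly off the definitions \eqref{eq:dtensor} and \eqref{riemann_st}, using the preceding theorem (that $N_{\mu\nu}$ is degree $1$ and ${H^\rho}_{\mu\nu}$ is degree $0$) together with the small lemma that $\delta_\mu=\partial_\mu+N_{\nu\mu}\bar\partial^\nu$ preserves momentum-homogeneity degree; the Euler-operator bookkeeping ($\partial_\mu$ commutes with $k_\rho\bar\partial^\rho$, $\bar\partial^\nu$ lowers the degree by one, products add degrees) is all sound. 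What your route buys is uniformity: the theorem on the degrees of $N$ and $H$ is stated for both the homogeneous case and the ``arbitrary function of a homogeneous function'' case, whereas the paper's route passes through Eq.~\eqref{eq:rkr}, whose theorem is stated only for homogeneous $\mathcal{H}$ (its proof, resting on Eq.~\eqref{eq:nkh}, does extend to the general case, but the paper leaves that implicit). What the paper's route buys is brevity given the already-established identities \eqref{eq:rdr} and \eqref{eq:rkr}, with no need for the degree-preservation lemma for $\delta_\mu$.
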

\begin{proof}
By deriving Eq.~\eqref{eq:rkr} with respect to momentum one finds
\begin{equation}
	\bar \partial^\lambda R_{\mu \nu \sigma}= {R^\lambda}_{\mu \nu \sigma}+k_\rho \bar \partial^\lambda  {R^\rho}_{\mu \nu \sigma}\,.
\end{equation} 
By using Eq.~\eqref{eq:rdr} one obtains
\begin{equation}
	k_\rho \bar \partial^\lambda  {R^\rho}_{\mu \nu \sigma}=	k_\rho \bar \partial^\rho  {R^\lambda}_{\mu \nu \sigma}=0\,.
\end{equation} 
From here we see that the horizontal curvature tensor is homogeneous on momentum of degree 0. Then, from Eqs.~\eqref{eq:rdr}-\eqref{eq:rkr} and the definition \eqref{eq:homogeneousH} we see that the \textit{d-curvature}  tensor is homogeneous on momentum of degree 1, since $k_\rho \bar \partial^\rho R_{\mu\nu\sigma}=k_\rho {R^\rho}_{\mu\nu\sigma}= R_{\mu\nu\sigma}$.
\end{proof}

\subsection{Connection between regular and generalized Hamilton spaces}
As discussed in~\cite{miron2001geometry}, every Hamilton space can be regarded as a particular case of a generalized Hamilton one (but not the other way around). The main difference between both spaces is the fact that, due to symmetries, the vertical affine connection in Hamilton spaces takes a simplified expression (Eq.~\eqref{eq:c_hamilton}) with respect to that for generalized Hamilton spaces (Eq.~\eqref{eq:affine_connection_p}).


The next question that naturally arises now is to think about what are the components of a metric tensor, corresponding to a maximally symmetric space, and which satisfies Eq.~\eqref{eq:c_hamilton}. Knowing them, we would have obtained a maximally symmetric space corresponding to a generalized Hamilton space. Then, the components of the metric can be obtained from a Hamiltonian by using \eqref{eq:H_metric}. This will restrict the form of the components of the metric, leading to a unique possible solution (note that the components of the vertical affine connection changes in a non-tensorial way under changes of momentum coordinates). 



Following the discussion of the previous subsection, let us consider a generic Hamiltonian of the form $\mathcal{H}=\mathcal{H}(k^2)$. We choose a Hamiltonian depending on the momentum squared in order to obtain a metric of the form~\eqref{eq:metric_Lorentz}.  From it we can express the metric in terms of this ansatz for the Hamiltonian by using \eqref{eq:H_metric}. Once calculated the metric, one can rewrite the $C_{\sigma}^{\mu\nu}(k)$ coefficients \eqref{eq:c_hamilton} and the curvature tensor in momentum space~\cite{miron2001geometry}  
\begin{equation}
{S_{\sigma}}^{\mu\nu\rho}= \frac{\partial {C_\sigma}^{\mu\nu}}{\partial k_\rho}-\frac{\partial {C_\sigma}^{\mu\rho}}{\partial k_\nu}+{C_\sigma}^{\lambda\nu} \,{C_\lambda}^{\mu\rho}-{C_\sigma}^{\lambda\rho}\,{C_\lambda}^{\mu\nu}\,,
\label{eq:Riemann_p}
\end{equation} 
as functions of $\mathcal{H}$ and its derivatives. On the other hand, since we want a maximally symmetric space (we will choose a de Sitter metric to make calculations simpler), we know that the following equation holds
\begin{equation}
S^{\tau \mu \nu \rho}= \frac{1}{\Lambda^2}\left( g^{\tau \nu}g^{\kappa \mu}-g^{\tau \kappa}g^{\mu \nu}\right)\,,
\label{eq:momentum_gs}
\end{equation} 
leading to a momentum curvature equal to $12/\Lambda^2$.  Again, we can write the metric in~\eqref{eq:momentum_gs} in terms of $\mathcal{H}$ by means of \eqref{eq:H_metric}. Now, the resulting expressions \eqref{eq:Riemann_p} and \eqref{eq:momentum_gs} can be equated to each other and a system of differential equation in $\mathcal{H}$ results. Imposing that  the Hamiltonian is a function of $k^2$ (so it is compatible with a metric of the family of~\eqref{eq:metric_Lorentz}) the previous system of differential equations can be solved, finding that  the Hamiltonian is
\begin{equation}
    \mathcal{H}(k^2)= \pm \Lambda \sqrt{k^2}-\frac{1}{2}\Lambda^2 \ln \left(1\pm \frac{2 \sqrt{k^2}}{\Lambda}\right)\,.
    \end{equation}
Now, using Eq.\eqref{eq:H_metric} one can find the corresponding de Sitter momentum metric.  
    

\section{Isometries of spacetime in a cotangent bundle metric}\label{sec4}
Previously, we have required a metric provided with ten isometries in the momentum space. The next step is to consider what happens to those isometries in spacetime. As we will see in this section, not every cotangent bundle geometry admits isometries in spacetime. In particular, we discuss the isometry algebra for the de Sitter case in GR and deformed GR (DGR) in spacetime.

\subsection{Spacetime isometries of the metric}

In order to understand what follows, there are two important definitions that are worth collecting. 
\begin{definition}[Canonical symplectic structure of $T^*M$]
The pair $(T^*M, dp_i \wedge dx^i)$ is a symplectic manifold where changes of variables are given by~\cite{miron2001geometry}
\begin{equation}
x^{\prime\mu}=x^{\prime\mu}(x)\,,\qquad k^\prime_{\mu}= \frac{\partial x^{\nu}}{\partial x^{\prime\mu}} k_\nu\,,
\end{equation}
implying
\begin{equation}
dx^{\prime\mu}=\frac{\partial x^{\prime\mu}}{\partial x^{\nu}} dx^\nu\,,\qquad \delta k^\prime_{\mu}= \frac{\partial x^{\nu}}{\partial x^{\prime\mu}} \delta k_\nu\,.
\end{equation}
\end{definition}

\begin{definition}[Spacetime isometry of the metric]
Any transformation that preserves the following phase-space line element:
\begin{equation}
	\mathcal{G}= g_{\mu\nu}(x,k) dx^\mu dx^\nu+g^{\mu\nu}(x,k) \delta k_\mu \delta k_\nu=g_{\mu\nu}(x^\prime,k^\prime) dx^{\prime\mu} dx^{\prime\nu}+g^{\mu\nu}(x^\prime,k^\prime) \delta k^\prime_{\mu} \delta k^\prime_{\nu}\,,
\end{equation}
is a space-time isometry of the metric.
\end{definition}
Therefore, these transformations indeed imply that 
\begin{equation}\label{isometrydef}
    \frac{\partial x^{\prime\mu}}{\partial x^\rho}g_{\mu\nu}(x^\prime,k^\prime)\frac{\partial x^{\prime\nu}}{\partial x^\sigma}=g_{\rho\sigma} (x,k)\,.
\end{equation}
Notice that the squared distance in momentum space, which can be identified with the Hamiltonian and satisfies~\eqref{eq:casimir_metric}, is indeed a Casimir of the algebra of isometries in spacetime.  This can be seen directly from the fact that the distance function does not change under isometries (since the metric neither does), so it is invariant under these transformations. This is a valid reasoning for any cotangent bundle metric depending on both spacetime and momentum, and, in particular, for GR. Let us consider now a specific example where this applies in GR.

\subsection{General Relativity}
We consider the de Sitter space (in 1+1 dimensions, for simplicity) and its algebra of isometry generators\footnote{The group of isometries of the (D+1)-dimensional de Sitter spacetime is called de Sitter group, and its connected component is isomorphic to the orthogonal group $SO(D+1,1)$ in Lorentzian signature (see \cite{Herranz_2008} and references therein). }.  This algebra is given by \cite{Herranz_2008}
\begin{equation}
\lbrace{E,P\rbrace}= \alpha P\,,\qquad \lbrace{P,N\rbrace}= E\,,\qquad \lbrace{E,N\rbrace}= P-\alpha N\,, 
\end{equation}
where $E$, $P$, and $N$ are the generators of time translations, space translations, and boost, respectively. Moreover, $\alpha$ is the cosmological constant. The Casimir of this algebra (i.e. the operator that commutes with every generator of the algebra) is given by
\begin{equation}
\mathcal{C}=E^2-P^2+2 \alpha N P\,.
\label{eq:def_casimir_algrebra}
\end{equation}
For the de Sitter coordinates in which the metric takes the form
\begin{equation}
    a_{00}=1\,,\qquad a_{01}=a_{10}=0\,,\qquad a_{11}=e^{2 \alpha x^0}\,,\qquad 
\end{equation}
the generators of isometries are \cite{Pascu:2012yu, Weinberg:1972kfs}
\begin{equation}
    E=k_0-\alpha k_1 x^1\,,\qquad P=k_1\,,\qquad N=x^1 k_0-k_1\left(\frac{\alpha (x^1)^2}{2}+\frac{1}{2\alpha}\left(e^{-2 \alpha x^0}-1\right)\right)\,.
\label{eq:generators_ds_gr}
\end{equation}
Since the metric does not depend on momenta, the squared distance in momentum space is simply
\begin{equation}
    \mathcal{H}(x,k)=\bar k^2=k_\mu a^{\mu\nu}k_\nu=k_0^2-k_1^2 e^{-2 \alpha x^0}\,.
\end{equation}
Now, it is easy to check that exactly the same expression is found when substituting the representation of the generators given in Eq.~\eqref{eq:generators_ds_gr} into~\eqref{eq:def_casimir_algrebra}, and thus $\mathcal{C}=\mathcal{H}(x,k)$, showing a particular realization of our derivation. 

This is a particular example of a curved spacetime. For different spaces one would then have different isometry algebras, but the Casimir of those algebras can be always identified with the squared distance in momentum space. This is also valid for a generic metric in cotangent bundle, depending on all phase-space variables. 

\subsection{Deformed General Relativity}
A generic (momentum and spacetime coordinates dependent) metric in DGR can be expressed as
\begin{equation}
    g_{\mu\nu}(x,k)=a_{\mu\nu}(x) f_1(x,k)+ \frac{f_2(x,k)}{\Lambda^2}k_\mu k_\nu+ \frac{f_3(x,k)}{\Lambda}(k_\mu Z_\nu+k_\nu Z_\mu)+f_4(x,k) Z_\mu Z_\nu\,,
\label{eq:generic_metric_dgr}
\end{equation}
where $Z_\alpha= n_\lambda {e^\lambda}_\alpha(x)$ is a timelike vector (which is normally used for constructing DRKs, as we will see in the next section), and the functions $f_i$ are such that $f_1(x,0)=1$ and $f_4(x,0)=0$, in order to recover the GR metric when taking the limit $\Lambda \to \infty$. Imposing that we have the same isometries as in general relativity (so that the limiting case recovers such a theory) we will see which forms of the metric~\eqref{eq:generic_metric_dgr} are allowed.  Considering the same isometries as GR and not a small number of them is physically motivated by the fact that it is desirable that several properties of the metric, such as spatial isotropy, are preserved, since it is important in theories ranging from black holes to cosmological models~\cite{Weinberg:1972kfs}. If we look for generic isometries leaving invariant~\eqref{eq:generic_metric_dgr}, the condition \eqref{isometrydef} must hold. Furthermore we can see an isometry as a change of spacetime coordinates $x \to x'(x)$ in the fiber bundle which does not depend on  momentum.  If we look at the first term of~\eqref{eq:generic_metric_dgr}, we see that the only way in which $a_{\mu\nu}(x) f_1(x,k)$ remains invariant under isometries is that $f_1(x',k')=f_1(x,k)$. Hence, $f_1$ must only depend on the momentum distance induced by $a_{\mu\nu}(x)$ because $\bar k^2$ is invariant (since it is a function of the Hamiltonian). The same reasoning could be applied for the function $f_2(x,k)$ in the second term. Moreover, the product $k_\mu k_\nu$ in the second term transforms appropriately, since~\cite{miron2001geometry} 
\begin{equation}
    k^\prime_\mu=\frac{\partial x^\nu}{\partial x^{\prime\mu}}k_\nu\,
\end{equation}
is fulfilled. The other two terms in~\eqref{eq:generic_metric_dgr} involve $Z_\mu$, but the tetrad is not generically invariant under isometries (so  neither is $Z_\mu$). Thus $f_3(x,k)=f_4(x,k)=0$. Consequently, this restricts the possible form of the metric under consideration, limiting us to those preserving linear Lorentz invariance in momentum space\footnote{
For constructing a DRK we are imposing a maximally symmetric momentum space, so Lorentz group is a part of the isometry group (see Sec.~\ref{sec:drk}). In particular, if one wants that the same isometries of GR appears in DGR, linear Lorentz invariance in momentum space must be preserved.  However, in spacetime such condition is not mandatory since for generic curved spacetimes there are no isometries corresponding to the Lorentz group (only locally can such an invariance be found).}
, i.e. to~\eqref{eq:metric_Lorentz}.  Therefore, both GR and DGR will have the same isometry algebra and, hence, the same Casimirs. The reason is that the squared distance in momentum space for the metrics~\eqref{eq:metric_Lorentz} are 
functions of the Casimir of GR~\cite{Pfeifer:2021tas}.



Now, one might wonder if there is any particular choice of the components of the metric \eqref{eq:generic_metric_dgr} so some isometries exist (see ~\cite{Relancio:2020zok} for a particular example). The answer is yes, it could be the case for particular choices (see~\cite{Gonner:1997ec,Caponio:2017lgy,Voicu:2023zem} for discussions about examples of Finsler geometries with isometries) but not in the general case. Nevertheless, we are interested here in establishing a model valid for any coordinate system, so we have to restrict ourselves to coordinate frames which are Lorentz invariant in momentum.  Of course, we do not mean to imply that there are no generic transformations that involve the momentum, $x^{\prime \mu}=x^{\prime \mu}(x,k)$ and $k^{\prime}_{\mu}=k^{\prime}_{\mu}(x,k)$, preserving the canonical Poisson bracket structure, $\lbrace{x^{\prime \mu},k^{\prime}_{\nu}\rbrace}=\delta^\mu_\nu$, and leaving the line element in phase space invariant. However, this kind of transformations are not considered in the literature before and such exploration goes beyond the scope of this paper.

Note that the aforementioned restriction on the form of the generic metric~\eqref{eq:generic_metric_dgr}  only arises when introducing a curved spacetime. If the metric does not depend on the spacetime coordinates, the generators of translations will satisfy the same algebra of SR
\begin{equation}
\lbrace{E,P\rbrace}= 0\,. 
\end{equation}
Consequently, whenever the metric~\eqref{eq:generic_metric_dgr} corresponds to a maximally symmetric  momentum space, one will be able to find some (in general) nonlinear Lorentz transformations, corresponding to isometries of the metric (they will be linear if the metric does not depend on a fixed vector).  The degeneracy present for flat spacetimes and the restriction for curved ones was also discussed in~\cite{Relancio:2020rys}.

\subsection{Choice of momentum metrics}
We have seen that there is an arbitrariness in the choice of the component representation of the metric: any metric of the form of~\eqref{eq:metric_Lorentz} is compatible with the consistent lift to curved spacetimes~\cite{Pfeifer:2021tas}, having the same isometries of the momentum independent metric. 

Now we follow a physical argument to select only one. As discussed in~\cite{miron2001geometry}, there are three different curvature tensors: one associated to spacetime,  one of momentum space, and another one of the interwined spacetime and momentum spaces. As it is well known in GR, Einstein's equations relate the energy-momentum tensor to the  space-time curvature tensor~\cite{Weinberg:1972kfs}. Following the same idea, a generalization of Einstein's equations was proposed in~\cite{miron2001geometry}, having then three different energy-momentum tensors, one associated to each curvature tensor. 

Physically, an energy-momentum tensor in spacetime can be easily understood: the content of matter and radiation bends spacetime, leading to a nontrivial geometry. In momentum space, the corresponding energy-momentum tensor is associated to a cosmological constant. However, a source for the intertwining curvature tensor is very difficult to explain from a physical point of view. Nevertheless, which is clear is that if the intertwining curvature tensor is zero, its corresponding energy-momentum tensor also vanishes. 


As shown in~\cite{miron2001geometry},  the intertwining curvature tensor is given by
\begin{equation} 
{P^{\mu\rho}}_{\lambda \nu}\,=\, \bar\partial^\rho {H^\mu}_{\lambda\nu} -{C^{\mu\rho}}_{\lambda;\nu}+{C^{\mu\sigma}}_{\lambda}{P^\rho}_{\sigma\nu}\,,
\label{eq:Riemann_sp}
\end{equation} 
being 
\begin{equation} 
{P^\rho}_{\sigma\nu}\,=\,{H^\rho}_{\sigma\nu}-\bar\partial^\rho N_{\nu\sigma}\,.
\label{eq:P_tensor}
\end{equation}  
Therefore, we can obtain the following result.

\begin{theorem}
For the family of metrics given in~\eqref{eq:metric_Lorentz}, if $f_2=0$ then  ${P^{\mu\rho}}_{\lambda \nu}=0$ is satisfied. Therefore, the metric is a conformal one  
\begin{align}
	g_{\mu\nu}(x,k) = a_{\mu\nu}(x)  f_1\left(\frac{\bar k^2}{\Lambda^2}\right) \,,
 \label{eq:conformal_metric}
\end{align}
depending the conformal factor on $\bar k^2$.
\end{theorem}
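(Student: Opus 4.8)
The plan is to evaluate the three contributions to the intertwining curvature tensor $P^{\mu\rho}{}_{\lambda\nu}$ in \eqref{eq:Riemann_sp} one by one for the metric obtained by setting $f_2=0$ in \eqref{eq:metric_Lorentz} — which is, by direct substitution, precisely the conformal metric \eqref{eq:conformal_metric}, so the only real content of the statement is the vanishing of $P^{\mu\rho}{}_{\lambda\nu}$ — and show that each one vanishes. The easiest is the last term $C^{\mu\sigma}{}_\lambda\,P^\rho{}_{\sigma\nu}$: by \eqref{eq:P_tensor} and \eqref{eq:affine_connection_n} one has $P^\rho{}_{\sigma\nu}=\bar\partial^\rho N_{\sigma\nu}-\bar\partial^\rho N_{\nu\sigma}$, which is identically zero because the nonlinear connection is symmetric, $N_{\sigma\nu}=N_{\nu\sigma}$. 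Since the metrics \eqref{eq:metric_Lorentz} come from the DRK construction over a maximally symmetric momentum space, the associated Hamiltonian (the squared momentum distance) is autoparallel by Th.~\ref{t8}, so \eqref{eq:affine_connection_n} holds and this term drops out. It remains to kill $\bar\partial^\rho H^\mu{}_{\lambda\nu}$ and $C^{\mu\rho}{}_{\lambda;\nu}$.

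The key structural fact is that, for \eqref{eq:conformal_metric}, the whole horizontal sector collapses to that of the undeformed spacetime. The momentum dependence enters only through $\bar k^2=a^{\mu\nu}(x)k_\mu k_\nu$; at fixed $x$ the momentum metric $a^{\mu\nu}(x)/f_1(\bar k^2/\Lambda^2)$ is invariant under the linear isotropy group of $a^{\mu\nu}(x)$, which fixes the origin, so the squared momentum distance is a fixed function $\mathcal{H}=\mathcal{H}(\bar k^2)$. Because $\bar k^2$ is $2$-homogeneous in $k$, Theorem~\ref{th_III.4} applies and gives $N_{\mu\nu}=k_\rho H^\rho{}_{\mu\nu}$; together with \eqref{eq:affine_connection_n} and the self-consistent construction of Sec.~\ref{subsec:construction} this forces $N_{\mu\nu}=k_\rho\gamma^\rho{}_{\mu\nu}(a)$, with $\gamma(a)$ the Levi-Civita connection of $a_{\mu\nu}(x)$. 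One then checks $\delta_\mu\bar k^2=0$ (equivalently $\delta_\mu\mathcal{H}=0$, \eqref{eq:casimir_delta}), hence $\delta_\mu f_1=0$ and $\delta_\mu g_{\sigma\nu}=f_1\,\partial_\mu a_{\sigma\nu}$, so \eqref{eq:affine_connection_st} collapses to $H^\rho{}_{\mu\nu}=\gamma^\rho{}_{\mu\nu}(a)$, which is momentum independent; therefore $\bar\partial^\rho H^\mu{}_{\lambda\nu}=0$.

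For the middle term I would compute $C_\lambda{}^{\mu\nu}$ from \eqref{eq:affine_connection_p}. Since $\bar\partial^\alpha g^{\mu\nu}=2\,\Phi(\bar k^2)\,\bar k^\alpha a^{\mu\nu}$ for a scalar $\Phi$ of $\bar k^2$ (with $\bar k^\alpha=a^{\alpha\beta}k_\beta$), a short calculation gives $C_\lambda{}^{\mu\nu}=-\Phi\,(\delta_\lambda^\nu\bar k^\mu+\delta_\lambda^\mu\bar k^\nu-a^{\mu\nu}k_\lambda)$. With the horizontal connection equal to $\gamma(a)$, each building block is horizontally covariantly constant: $a_{\mu\nu;\rho}=a^{\mu\nu}{}_{;\rho}=0$ by metric compatibility of $\gamma(a)$, $k_{\mu;\nu}=N_{\mu\nu}-\gamma^\lambda{}_{\mu\nu}k_\lambda=0$ by the form of $N$, hence $\bar k^\alpha{}_{;\rho}=0$, $(\bar k^2)_{;\rho}=0$ and so $\Phi_{;\rho}=\Phi'(\bar k^2)(\bar k^2)_{;\rho}=0$; the Kronecker symbol is trivially parallel. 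By the Leibniz rule, $C^{\mu\rho}{}_{\lambda;\nu}=0$. Substituting the three vanishing contributions into \eqref{eq:Riemann_sp} yields $P^{\mu\rho}{}_{\lambda\nu}=0$, and the conformal form \eqref{eq:conformal_metric} with conformal factor $f_1(\bar k^2/\Lambda^2)$ is \eqref{eq:metric_Lorentz} at $f_2=0$.

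The main obstacle is the second paragraph: establishing that the nonlinear connection of \eqref{eq:conformal_metric} is $N_{\mu\nu}=k_\rho\gamma^\rho{}_{\mu\nu}(a)$ and hence that the horizontal affine connection is momentum independent — i.e.\ that the conformal deformation with an autoparallel conformal factor leaves the horizontal geometry untouched. Theorem~\ref{th_III.4} (applicability granted by $\mathcal{H}=\mathcal{H}(\bar k^2)$ and the $2$-homogeneity of $\bar k^2$) is the clean shortcut for this step; once it is in place, the vanishing of $\bar\partial^\rho H^\mu{}_{\lambda\nu}$ and of $C^{\mu\rho}{}_{\lambda;\nu}$ are routine applications of the Leibniz rule for metrical $d$-connections.
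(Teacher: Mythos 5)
Your proof is correct and follows essentially the same route as the paper: the $C^{\mu\sigma}{}_\lambda P^\rho{}_{\sigma\nu}$ term dies by autoparallelism and Eq.~\eqref{eq:affine_connection_n}, the $\bar\partial^\rho H^\mu{}_{\lambda\nu}$ term dies because the horizontal connection collapses to the Christoffel symbols of $a_{\mu\nu}(x)$, and $C^{\mu\rho}{}_{\lambda;\nu}=0$ follows from $a_{\mu\nu;\rho}=0$ and $k_{\mu;\nu}=0$ via the Leibniz rule. The only cosmetic difference is that the paper justifies the momentum-independence of $H^\rho{}_{\mu\nu}$ by invoking Th.~\ref{th:hhtilde} (same connection as for the GR Hamiltonian $\bar k^2$), whereas you reach the same conclusion through Th.~\ref{th_III.4} and a direct evaluation of Eq.~\eqref{eq:affine_connection_st}.
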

\begin{proof}
  For the kind of metrics of~\eqref{eq:metric_Lorentz}, we know that the Hamiltonian is a function of $\bar k^2$, and thus autoparallel~\cite{Barcaroli:2015xda}. Therefore, by Eq.~\eqref{eq:affine_connection_n}, ${P^\rho}_{\sigma\nu}=0$. Also, from Th.~\ref{th:hhtilde} we know that $\bar\partial^\rho {H^\mu}_{\lambda\nu}=0$, since in GR $N_{\mu\nu}=k_\lambda {\Gamma^\lambda}_{\mu\nu}$~\cite{miron2001geometry} holds, where ${\Gamma^\lambda}_{\mu\nu}$ are the Christoffel symbols (which are independent on $k$). Hence, we only need to prove that the covariant derivative of the vertical affine connection vanishes for that type of metric. 

  Plugging~\eqref{eq:conformal_metric} into the definition~\eqref{eq:affine_connection_p}, it is easy to find
  \begin{equation}
  {C_{\mu}}^{\nu\rho}=\frac{f_1^\prime}{f_1}\left(k_\tau a^{\nu\tau}\delta^{\rho}_\mu+k_\tau a^{\rho\tau}\delta^{\nu}_\mu-k_\mu a^{\nu\rho}\right)\,,
\end{equation}
where $f_1^\prime$ denotes its derivative with respect to $\bar k^2$. Since the covariant derivative of the metric vanishes (by definition), and we know from \ref{th3.7} that
\begin{equation}
    k_{\mu;\nu}=N_{\mu\nu}-{H^\lambda}_{\mu\nu}k_\lambda=0\,,
\end{equation}
the covariant derivative of the vertical affine connection is also zero. Then, ${P^{\mu\rho}}_{\lambda \nu}=0$.
\end{proof}

Indeed, since the components of the vertical affine connection do not transform tensorially under momentum changes of coordinates, the horizontal  covariant derivative of this connection must be zero only for the metric~\eqref{eq:conformal_metric}.

As a comment of the above theorem, notice that for the particular case of a maximally symmetric momentum space, $f_1$ is unequivocally determined, being
\begin{equation}
  f_1=\left(1\pm\frac{\bar k^2}{4\Lambda^2}\right)^2\,,
\end{equation}
where the plus and minus signs correspond to AdS and dS, respectively.

\section{Deformed Special and general Relativity and isometries of momentum space}\label{sec5}
In the previous section, we discussed the isometries of a cotangent bundle geometry in spacetime. Now we discuss the isometries in momentum space and its relationship with nonlocality. But first of all, we want to review some previous useful results.

\begin{definition}
Starting from a cotangent bundle metric, one can define some transformations on the vertical bundle, i.e. for a fixed point on the base manifold (spacetime), that change the position in the fiber (momentum). This transformations are defined as
\begin{equation}\label{isometrydefmomentum}
  g_{\mu\nu}(x,k^\prime)=  \frac{\partial k^{\prime}_\mu}{\partial k_\rho}g_{\rho\sigma} (x,k) \frac{\partial k^{\prime}_\nu}{\partial k_\sigma}\,.
\end{equation}
\end{definition}

These transformations will generally depend on the point of the base manifold. Indeed, as discussed in~\cite{Pfeifer:2021tas}, for the particular case of the (maximally symmetric momentum) metric 
\be
 g_{\mu\nu}(k)=a_{\mu\nu} + \frac{k_\mu k_\nu}{\Lambda^2}\,,
	\label{eq:metric_ds}
\ee
its isometry generators in the vertical space are 
\begin{align}\label{eq:generators_desitter_curved}
	\mathcal{T}_S^\lambda = \sqrt{1 +\frac{\bar k^2}{\Lambda^2} }\ \frac{\partial}{\partial k_\lambda}\,,   \qquad \mathcal{J}^{\mu \nu} = k_\rho(\delta^\nu_\lambda a^{\mu\rho}-\delta^\mu_\lambda a^{\nu\rho}) \frac{\partial}{\partial k_\lambda}\,,
\end{align}
satisfying 
\be
[\mathcal{T}^\alpha_S, \mathcal{T}^\beta_S] = \frac{\mathcal{J}^{\alpha\beta}}{\Lambda^2}\,, \quad\quad [\mathcal{T}^\alpha_S, \mathcal{J}^{\beta\gamma}]= a^{\alpha\beta} \mathcal{T}^\gamma_S - a^{\alpha\gamma} \mathcal{T}^\beta_S\,, \quad\quad [\mathcal{J}^{\alpha\beta},\mathcal{J}^{\gamma\delta}]=a^{\beta\gamma}\mathcal{J}^{\alpha\delta} - a^{\alpha\gamma}\mathcal{J}^{\beta\delta} - a^{\beta\delta}\mathcal{J}^{\alpha\gamma} + a^{\alpha\delta}\mathcal{J}^{\beta\gamma}\,.
\label{cov_generators_curved}
\ee 
This choice of the generators of translations $\mathcal{T}^\alpha_S$ lead to a finite transformation that can be identified with a composition law of the momenta~\cite{Pfeifer:2021tas}
\be
(p\oplus q)_\mu \,= \,p_\mu \left(\sqrt{1+ q_\rho q_\nu a^{\rho\nu}}+\frac{p_\rho q_\nu a^{\rho\nu}}{ \Lambda^2\left(1+\sqrt{1+ p_\rho p_\nu a^{\rho\nu}/\Lambda^2}\right)}  \right) + q_\mu \,,
\label{DCLSnyder-1}
\ee 
corresponding  to Snyder kinematics in the Maggiore representation~\cite{Battisti:2010sr} when $a^{\mu\nu}\to\eta^{\mu\nu}$. That is why we used the subindex $S$ in the previous equations.

Different kinematics can be associated to the same momentum metric: they share the same Hamiltonian (squared geometric distance in momentum space) but have different composition laws (finite translations). We now discuss possible different choices of translations, so new kinematics are obtained. We also show how the generators of translations are related to a noncommutativity of spacetime.

\subsection{\texorpdfstring{$\kappa$}{k}-Poincaré kinematics}
$\kappa$-Poincaré algebra~\cite{Majid1994} can be obtained from the following choice of translation generators
\be
\mathcal{T}^\mu_\kappa= \mathcal{T}^\mu_S+Z_\alpha \frac{\mathcal{J}^{\mu\alpha}}{\Lambda}\,,
\label{cov_gen_kappa2}
\ee
being $Z_\mu=n_\alpha {e^\alpha}_\mu$ and  $\mathcal{J}^{\mu\alpha}$ the same generators of Eq.~\eqref{eq:generators_desitter_curved}, leading to the algebra
\be
[\mathcal{T}^\alpha_\kappa, \mathcal{T}^\beta_\kappa] =\frac{Z_\gamma}{\Lambda}\left(\mathcal{T}^\alpha_\kappa a^{\beta\gamma} -\mathcal{T}^\beta_\kappa a^{\alpha\gamma} \right) \,, \quad\quad [\mathcal{T}^\alpha_S, \mathcal{J}^{\beta\gamma}]= a^{\alpha\beta} \mathcal{T}^\gamma_\kappa - a^{\alpha\gamma} \mathcal{T}^\beta_\kappa+\frac{Z_\delta}{\Lambda}\left(a^{\delta\beta} \mathcal{J}^{\alpha\gamma}-a^{\delta\gamma} \mathcal{J}^{\alpha\beta}\right)\,.
\label{cov_algebra_kappa_curved}
\ee
By replacing $\eta^{\mu\nu}\to a^{\mu\nu}$ and $n_\mu\to Z_\mu$ in the composition law of $\kappa$-Poincaré in the so-called classical basis~\cite{Borowiec2010}
\be
(p\oplus q)_\mu= p_\mu\left(\sqrt{1+\frac{q^2}{\Lambda^2}}+\frac{q\cdot n}{\Lambda}\right)+q_\mu+n_\mu\left(\frac{\sqrt{1+p^2/\Lambda^2}-p\cdot n/\Lambda}{1-((p\cdot n)^2-p^2)/\Lambda^2}\left(q\cdot n+\frac{(p\cdot n) (q\cdot n)-  p\cdot q  }{\Lambda}\right)-q\cdot n\right) \,,
\ee
we could explicitly compute the corresponding composition law associated to the metric~\eqref{eq:metric_ds}, which is
\be
(p\oplus q)_\mu= p_\mu\left(\sqrt{1+\frac{\bar q^2}{\Lambda^2}}+\frac{\bar q\cdot \bar Z}{\Lambda}\right)+q_\mu+Z_\mu\left(\frac{\sqrt{1+\bar p^2/\Lambda^2}-\bar p\cdot \bar Z/\Lambda}{1-(\bar p\cdot \bar Z-\bar p^2)/\Lambda^2}\left(\bar q\cdot \bar Z+\frac{(\bar p\cdot \bar Z) (\bar q\cdot \bar Z)-\bar p\cdot \bar q}{\Lambda}\right)-\bar q\cdot \bar Z\right) \,,
\ee  
where we have used the notation $\bar k\cdot \bar Z=k_\mu Z_\nu a^{\mu\nu}$. This composition law has not been previously calculated in the literature.  Note that this composition law is associative, which is not the case of Snyder kinematics, as can be easily checked from Eq.~\eqref{DCLSnyder-1}.

Moreover, the composition law of $\kappa$-Poincaré presents some crucial differences with respect to the one of Snyder kinematics. First, this composition law is not invariant under spatial isometry transformations, because $Z_\mu$ is not invariant under such transformations (isometries does not leave invariant the form of the tetrad in general). Second, this composition law explicitly depends on the tetrad describing the curvature on spacetime, i.e., it depends on the choice of frame used for defining observers. This means that different tetrads lead to different momentum conservation laws and, thus, different total momentum of a system of particles for different observers. This goes against the independency of the results of physical measurements for different observers. Both problems are not present for  Snyder kinematics. Then, even if in that case the associativity property is lost, Snyder kinematics seems to be privileged from a geometrical point of view (see~\cite{Relancio:2024axb,Relancio:2024loc} for a similar conclusion from a field theory perspective).

\subsection{Noncommutative spacetime from geometry}

There are two ways to consider a non-commutativity of spacetime from a curved momentum space: either to identify the non-commutative space-time coordinates with the generators of the translations in the momentum space,
\be
\mathcal{X}^\alpha=\mathcal{T}^\alpha\,,
\ee
or with the tetrad of the cotangent bundle metric
\be
\mathcal{X}^\alpha={\tilde e^\alpha}_{\,\,\,\,\mu}(x,k)\frac{\partial}{\partial p_\mu}\,,
\ee
satisfying 
\be
{\tilde  e^\alpha}_{\,\,\,\,\mu}(x,k) \eta_{\alpha\beta}{\tilde  e^\beta}_{\,\,\,\,\nu}(x,k)=g_{\mu\nu}(x,k)\,.
\ee
Both approaches lead to a noncommutativity, but different from each other. While from an algebraic point of view the former possibility is the one usually considered (see the discussion of~\cite{Carmona:2021gbg}, and references therein), from a geometrical point of view the latter seems more natural ~\cite{Relancio:2021ahm}.

While for Snyder models one finds that the noncommutativity constructed from the tetrad of the metric is not the usual Snyder noncommutativity derived from the isometries generators (Eq.~\eqref{cov_generators_curved}) but something of the form 
\be
[\mathcal{X}^\alpha_S, \mathcal{X}^\beta_S] = f(\bar k^2)\frac{\mathcal{J}^{\alpha\beta}}{\Lambda^2}\,,
\ee
the same algebra (up to a global sign) is found for $\kappa$-Poincaré  when regarding the construction of the spacetime noncommutativity for both ways. Note that in both cases, the noncommutativity in $\kappa$-Poincaré depends explicitly on the tetrad of spacetime, so it is not invariant for different choices of the tetrad. This is not the case of Snyder noncommutativity, which depends solely on the spacetime metric.  Then, as in the previous subsection, Snyder kinematics seems to be privileged over $\kappa$-Poincaré from this perspective.

\section{Conclusions}
\label{conclusions}
In this paper we show a simple way to construct the geometrical ingredients of a generalized Hamilton space starting from a given metric depending on all the phase-space variables: the Hamiltonian can be defined as the squared distance in momentum space (which can be done since the vertical affine connection can be easily computed for a given metric); the nonlinear coefficients can be obtained from this Hamiltonian as it is done in Hamilton spaces; and the remaining ingredients can be constructed with these objects. This allows one to work with these spaces, which have a richer structure (in the vertical space) than Hamilton spaces. 

We show several properties of cotangent bundle geometries when the Hamiltonian is autoparallel, and for the particular case of being a homogeneous function of the momenta or an arbitrary function of a homogeneous function of momenta. Moreover, we discuss what kind of momentum dependency is allowed in cotangent bundle metrics such that they possess the same number of isometries of spacetime as the corresponding momentum independent metric (this last one obtained by taking the limit of the momentum going to zero in the  general metric). This restricts the possible momentum terms, leaving only space to those preserving linear Lorentz invariance in momentum space. 

We also discuss the isometries of momentum space. From translations in momentum space, which can be identified with a deformed conservation law of momenta in some quantum gravity models, we find that $\kappa$-Poincaré kinematics depends explicitly on the choice of the tetrad used. This seems to imply that Snyder kinematics is privileged, since it is the spacetime metric and not the explicit form of the tetrad that appears in the composition law.

A comparison with other papers considering a deformed algebra of  anti-de Sitter spacetime, such as~\cite{Ballesteros2019}, cannot be made, since our geometrical approach does not allow for the deformed Lorentz generators appearing in these papers to be momentum isometries of the cotangent bundle metric. Therefore, in order to consider such an algebra, the restriction of the Hamiltonian to be autoparallel should be lifted, which goes beyond the scope of this work.

To sum up, the study carried out in this paper is important for several reasons. On the one hand, since we show how to obtain all the necessary ingredients  for describing the geometrical structure of generalized Hamilton spaces, the formal developments obtained here paves the way to consider them at the formal mathematical level. On the other hand, they will also be important in the various applications mentioned at the beginning of the introduction, and, in particular, in quantum gravity.  When imposing a consistent lift of symmetries from flat to curved spacetimes, we find that some of the possible basis are restricted and, even more, some kinematical models seems to be privileged. This restricts the possible kinematical deformations of SR within the framework of DSR even at the flat spacetime scenario, since it should be recovered as a smooth limit of the curved one. In addition, some of the phenomenological consequences of DSR, such as time delays of massless particles with different energies~\cite{Addazi:2021xuf,AlvesBatista:2023wqm}, depend on the basis and the model, so the results of this paper should be taken into account for such studies.


\section*{Acknowledgments}
This research is supported by the Spanish Ministry of Science and Innovation MICIN, as well as by the Regional Government of Castilla y León (Junta de Castilla y León, Spain), with funding from European Union NextGenerationEU   through the QCAYLE project (PRTRC17.I1).  
The authors would also like to thank C. Pfeifer for fruitful discussions.

\bibliography{QuGraPhenoBib}

\end{document}